\newtheorem{corollary}{Corollary}
\newtheorem{definition}{Definition}
\newtheorem{theorem}{Theorem}
\newtheorem{proposition}{Proposition}
\newtheorem*{proposition*}{Proposition}
\newcommand{\inter}[2]{\llbracket #1;#2 \rrbracket} 
\newcommand{\lrgsize}[1]{\mathlarger{\mathlarger{#1}}}
\newcommand{\beq}{\begin{equation}}
\newcommand{\eeq}{\end{equation}}
\newcommand{\be}{\begin{equation}}
\newcommand{\ee}{\end{equation}}
\newcommand{\bea}{\begin{eqnarray}\displaystyle}
\newcommand{\eea}{\end{eqnarray}}
\newcommand{\C}{{\mathbb C}}
\newcommand{\N}{{\mathbb N}}
\def\g{ \gamma} 
\def\s{ \sigma} 
\def\r{ \rho}
\def\e{{\rm e}}
\def\Sym{ {\rm Sym} } 
\def\Orb{ {\rm Orb}} 
\def\Stab{ {\rm Stab} } 
\newcommand{\ov}[1]{\overline{#1}}
\newcommand{\Tr}{{\rm Tr}}
\begin{document}

\begin{center}

 {\Large \bf  Multiple-Order Tensor Field Theory: \\ Enumeration of unitary invariant observables}

 \medskip

\bigskip

\

{Joseph Ben Geloun$^{(1,3, a)}$,\, Arnauld Solente$^{(1,2,b)}$}

\

 (1) Laboratoire d'Informatique de Paris Nord UMR CNRS 7030 \\ Universit\'e Sorbonne Paris Nord, 99, avenue J.-B. Clement, 93430 Villetaneuse, France 
  
  (2) Magistère de Physique Fondamentale d'Orsay, \\ Université Paris Saclay,
 Rue Louis de Broglie, 91400 Orsay, France. 

  (3) ICMPA-UNESCO Chair, Université d'Abomey-Calavi, 072BP50, Cotonou, Benin  

\ 
  
  $^a$\texttt{bengeloun@lipn.univ-paris13.fr}, 
 $^b$\texttt{arnauldsolente@gmail.com}

  \medskip

  \medskip

\date{}

\begin{abstract}

In Tensor Field Theory (TFT), observables are defined through tensor field contractions that produce unitary invariants for complex-valued tensor fields. Traditionally, these observables are constructed using tensor fields of a fixed order $d$. Here, we propose an extended theoretical framework for TFT that incorporates tensor fields of varying orders $d'$, satisfying $d' \leq d$. We then establish a comprehensive group-theoretic formalism that enables the systematic enumeration of these complex TFT observables. This approach encompasses existing counting methods and therefore recovers known results in specific limiting cases. Additionally, we provide computational tools to facilitate the enumeration of these invariants, unveiling novel integer sequences that have not been documented elsewhere.

\end{abstract}

\end{center}

Keywords:  unitary invariants, matrix/tensor
models, tensor field theory, topological quantum field theory.

\newpage

\tableofcontents
 
 \section{Introduction} \label{sec:Intro}

Among the various approaches to quantum gravity (QG), the random geometry paradigm is predicated on defining a set of discrete geometries by randomly assembling simplices and polytopes, each assigned a physically motivated weight. The continuous limit for such geometries is achieved by considering an infinite number of building blocks and simultaneously reducing their area. This approach has yielded significant results in 2D QG through the use of matrix models, which generate, via their Feynman graphs, random discretizations of surfaces \cite{DiFrancesco:1993cyw}. The continuum limit of matrix models is now recognized as the Brownian sphere \cite{LE_GALL_2010}, which is associated with Liouville gravity \cite{DiFrancesco:1993cyw}.

Building on the success of matrix models, tensor models were introduced to address QG in higher dimensions \cite{Ambjorn:1990ge,Sasakura:1990fs,Gross:1991hx}. These models are also quantum field theories, with Feynman graphs corresponding to random discrete geometries in higher dimensions. This perspective has led to numerous advancements. Group Field Theory appears as a lattice gauge field theory, where the gauge group is judiciously chosen to encode essential geometrical data on the Feynman graphs \cite{Boulatov:1992vp,Freidel_2005,Oriti:2006se}. This approach has enabled significant progress toward the emergence of a notion of geometry in the continuum limit. For recent developments in the matter (Ginzburg-Landau theory, Lorentzian Barrett-Crane model), see \cite{Marchetti:2022nrf, Carrozza:2024gnh,carrozza2016tensorial}.

Random tensors and Tensor Field Theory (TFT) \cite{Rivasseau:2011hm,gurau2017random} offer closely related and complementary perspectives on tensor models. In this framework, the models are combinatorial (equilateral geometric data affixed to the discrete geometry) and tailored for statistical and Quantum Field Theory (QFT) computations. The ultimate goal of this quantum field-theoretic approach is to gain insights into the continuum limit of the random geometry attached to its Feynman graphs. As statistical models, the continuum limit of tensor models reveals branched polymer geometries, irrespective of the order $d\ge 3$ of the theory \cite{Bonzom_2011}. This result stands out as both robust and universal, presenting an unexpected and puzzling perspective from a geometric standpoint \cite{Gurau_2013}.

TFT introduces genuine propagating degrees of freedom in random tensors, bridging the gap for the application of advanced QFT methods to tackle the same problem. Perturbative analyses have demonstrated the renormalizability of numerous TFTs \cite{BenGeloun:2011rc}, unveiling remarkable properties such as asymptotic freedom and asymptotic safety within specific regimes \cite{BenGeloun:2013vwi, carrozza2016tensorial}. At the non-perturbative level, inspired by preliminary studies of matrix models \cite{Eichhorn:1309,Eichhorn:1408}, the Functional Renormalization Group has been systematically applied to TFT across various dimensions \cite{Benedetti:1411, BenGeloun:1508, BenGeloun:1601, Krajewski:2016jf, Eichhorn:1811, Eichhorn:2018ylk, Eichhorn:2019hsa, Eichhorn:2020gq, Lahoche:2019wm, Lahoche:2020bi, Lahoche:2020df, Lahoche:2020ib}. These efforts have yielded compelling evidence suggesting the existence of fixed points in the large $N$ limit and for non-compact configuration space of the fields. This set of results has opened promising avenues of research, framing QG as random geometry endowed with a continuum limit tool.

On another front, efforts to move beyond the phase of branched polymer geometry have led to the introduction of new tensor models and TFTs, known as enhanced models \cite{Bonzom:2015axa, Ben_Geloun_2018}. Under this hypothesis, partial progress in overcoming branched polymer geometry has been achieved by properly rescaling suppressed configurations in the large $N$ limit, thereby allowing them to contribute at criticality. This area of research is still evolving, with additional phases being discovered for tensor models, such as the Brownian map phase and phases involving multiple 2D universes \cite{Bonzom:2015axa}. Furthermore, TFT models with enhanced interactions demonstrate that non-melonic diagrams are (super-)renormalizable \cite{Ben_Geloun_2018}. These diagrams significantly influence the RG flow, modifying their asymptotic behavior—departing from asymptotically free to asymptotically safe, or, in certain models, transitioning to regimes that defy both classifications \cite{Geloun:2023oyd}. Nonetheless, finding a significant geometrical phase in dimensions strictly greater than 2 at the continuum limit in tensor models or within TFTs remains a challenge. A crucial question to address is how to explore the Tensor Theory Space and construct TFT actions with meaningful interactions. This paper delves into this issue and proposes approaches to tackle it.

The general approach to writing an interaction for a TFT model involves tensor contractions for a tensor field of fixed order $d$ (the interaction then represents a polytope of dimension $d$). Other working hypotheses could be explored; specifically, one might allow tensor contractions of different orders. The interaction would then be associated with the gluing of polytopes of different dimensions. This hypothesis would enrich the landscape of TFT models that describe discrete spaces of dimension $d'\le d$. Finally, these new models could lead to the discovery of new continuous limits.

The primary aim of this paper is to address a foundational question: given $n_1$ tensors of order 1, $n_2$ tensors of order 2, $n_3$ tensors of order 3, and so on, how many non-equivalent contractions can be constructed? Physically, this translates to determining the total number of possible interactions between these tensor fields. It is noteworthy that the focus is not on the specific nature of the interactions but solely on their enumeration. Crucially, each tensor interaction corresponds uniquely to an orbit of a particular group action on a given set, and identifying this orbit characterizes the interaction itself. Additionally, each interaction can be visually represented as a graph. This renders the enumeration of interactions equivalent to counting all possible graphs with specific constraints.

\par

This investigation builds upon a solid foundation within the field. Seminal work by Read \cite{Read1959TheEO} established a framework for counting locally constrained graphs—those defined by fixed valence and edge coloring—through the application of group theory. Recently, de Mello Koch and Ramgoolam developed this method and exported it to the realm of QFT, where they enumerated various types of Feynman graphs arising from matrix models, String Theory, and Quantum Electrodynamics \cite{deMelloKoch:2011uq}. In the context of TFT, the enumeration of interactions and observables for complex tensor field models has been systematically addressed in \cite{BenGeloun:2013lim}. This work has provided a detailed account of unitary invariants of fixed order $d$ for $n$ tensor fields $T$ and $n$ tensor fields $\bar{T}$. Real tensor field models have also been explored in \cite{Avohou_2020}, with an emphasis on counting orthogonal invariants. Building on this result, Avohou et al. recently extended these findings to tensors that permit both unitary and orthogonal transformations in \cite{Avohou:2024agh}. Each of these studies deploys group theory to count graphs with fixed valence, corresponding to interactions within tensor models. Our approach aligns with and generalizes these previous efforts in the unitary case: we leverage group theory to enumerate the interactions and observables of complex TFTs. Specifically, this involves the enumeration of unitary invariants arising from tensor contractions of various orders, each bounded by a given integer $d$. 
An important remaining question concerns the potential range of applications for the enumeration and the methods developed in this paper.

The enumeration of tensor observables has provided valuable insights across several domains \cite{BenGeloun:2013lim,Avohou_2020, Avohou:2024agh,BenGeloun:2017vwn,BenGeloun:2020yau,BenGeloun:2020lfe,BenGeloun:2021cuj}, particularly as it exhibits a strong correspondence with Topological Field Theory. Remarkably, each of the aforementioned enumerative results aligns with the counting of branched covers of various topological structures. This connection has been formalized using Dijkgraaf-Witten theories \cite{DW}, employing the symmetric group as gauge group and a lattice encoding the gauge orbit corresponding to a tensor invariant. For instance, the enumeration of unitary invariants of order 3, constructed from $2n$ tensors, corresponds to the counting of branched covers of the sphere with three punctures \cite{BenGeloun:2013lim}. Building on this foundational result, numerous other enumerative approaches have been developed in direct correspondence with tensor model observables \cite{Avohou_2020, Avohou:2024agh}. These findings further demonstrate the intimate link between the combinatorics of tensors and Topological Field Theory.

Another key observation is that performing a representation/Fourier-theoretic transform of any of the previous enumeration formulas introduces insights from the realm of Combinatorial Representation Theory \cite{BarceloRam1997}. As highlighted in a series of works \cite{BenGeloun:2017vwn,BenGeloun:2020yau,BenGeloun:2020lfe,BenGeloun:2021cuj}, the enumeration of unitary tensor observables has addressed the longstanding question posed by Murnaghan regarding the combinatorial interpretation of the Kronecker coefficient \cite{Murnaghan1938TheAO}. Specifically, the Kronecker coefficient enumerates certain vectors within operator kernels, which act on an algebra whose basis is labeled by unitary tensor invariants. This result was unveiled through the framework of enumerating tensor invariants. Consequently, it becomes crucial to explore broader and more general questions regarding the enumeration of tensor observables within a wider context, as undertaken in this work.
It is easy to advocate that the interplay between Combinatorics, Representation Theory, and Topological Field Theory has consistently been a central theme in Theoretical Physics. Recently, this dynamic has been revitalized by a wave of transformative results that uncover connections at the intersection of String Theory, Quantum Information, and (quantum) Complexity Theory \cite{Padellaro:2025jtd,dMelloKoch:2021,Ramgoolam:2022xfk,Padellaro:2023oqj}. The present paper is undoubtedly seminal and opens another playground for exploring possible connections between these very same topics.

The structure of this paper is as follows: in Section \ref{sec:Revue}, we take a fresh look at complex-valued tensor contractions, exploring their definition as unitary invariants and showing how they can be grouped into equivalence classes. To make these ideas more concrete, we include several examples along the way. This section is largely based on the work drawn from \cite{BenGeloun:2013lim}, and focuses in particular on how such interactions can be enumerated for tensors of a fixed order.
Section \ref{sec:ordremult}, which introduces the novel contribution of this paper, extends the formalism to encompass tensors of all orders through the definition of multiple-order tensor contractions. In this broader framework, we revisit the property of unitary invariance. The main results of this paper are:

- the definition of multiple-order tensor contractions, Definition \ref{def:CtrOrdreMult}, 

- the proof that  they are unitary invariants, Theorem \ref{theo:InvarianceUnderU}, 

- the enumeration formula for multiple-order tensor interactions,  Theorem  \ref{theo:CountMultiOrder}, 
 that can be directly implemented, see Appendix \ref{app:Code}. 
 
 In section \ref{sec:application}, we provide a comprehensive application of our main theorem, alongside validations through a combinatorial case.
Section \ref{sec:concl} is our conclusion which summarizes our findings and outlines a selection of problems for future exploration. The manuscript concludes with multiple appendices. Appendix \ref{app:Inv} presents the proof of the unitary invariance of complex tensor contractions at fixed order and their invariance under a certain group action. Appendix \ref{sec:CalculOrbites} includes numerical data on the enumeration for a finite set of tensors. Finally, Appendix \ref{app:Code} offers an extensive collection of the codes developed to illustrate the counting formulae 
of the paper.

\section{Counting unitary tensor invariants: a review}
\label{sec:Revue}

In this section, we examine the construction and enumeration of interactions in complex tensor models at a fixed order $d$, with the goal of generalizing the formalism in the subsequent section \ref{sec:ordremult}.
Thus, the content of this section is familiar territory \cite{BenGeloun:2013lim}, but presenting it this way simplifies and supports the proof of the following statements.

\subsection{Tensor contractions and unitary group invariance}\label{subsec:DefContraction}

We provide a definition of a tensor (e.g., see Lang's book \cite{Lang2002} chap. XVI, p554). Note that this is not the most general one, but it is largely sufficient for our purpose.
It is also noted that we use the summation convention that repeated indices are implicitly summed. 

\begin{definition}(Tensor) \label{def:tenseur}
Let $d \in \N^*$ and $E$ be a $\C$-vector space of dimension $N$. A tensor of order $d$, $T:E^{\times d} \rightarrow \C$  is a multilinear form. Given $B_1 = \{e_1, e_2,\cdots, e_N\}$ a basis of $E$, we define the coefficients of $T$ by: 
\bea 
\forall i_1, \cdots, i_d \in \inter{1}{N}, \quad T_{i_1\cdots i_d} = T(e_{i_1}, \cdots, e_{i_d}). 
\eea
We say that $i_1$ is the index of color 1, $i_2$ the index of color 2, etc. The coefficients of the conjugate tensor $\ov{T}$ are defined by:
\bea 
\forall i_1, \cdots, i_d \in \inter{1}{N}, \quad \ov{T}_{i_1\cdots i_d} = \ov{T(e_{i_1}, \cdots, e_{i_d})}. 
\eea
\end{definition}  

Given another basis $B_2 = \{f_1,f_2, \cdots, f_N\}$ of $E$, and denoting 
$\Lambda \in GL(N, \C)$, the change of basis matrix from  $B_1$ to $B_2$, $f_i = \Lambda_{ij}e_j$, then the coefficients of $T'_{i_1\cdots i_d}$ of $T$ and $\ov{T}'_{i_1\cdots i_d}$ of $\ov{T}$ in basis $B_2$ are given by: 
\bea 
\label{eq:chtTenseurBase} 
T'_{i_1\cdots i_d} = \Lambda_{i_1 j_1}\cdots\Lambda_{i_d j_d}T_{j_1\cdots j_d} \quad \text{ and } \quad\ov{T}'_{i_1\cdots i_d} = \ov{\Lambda}_{i_1 j_1}\cdots\ov{\Lambda}_{i_d j_d}\ov{T}_{j_1\cdots j_d} \;. 
\eea

\begin{definition}[Tensor contraction] \label{def:ContrcUnitaireMonoTenseur} Let $d,n \in \N^*$, $T$ be a tensor of order $d$ and $\ov{T}$ be its conjugate. A contraction of $n$ tensors $T$ of order $d$ and $n$ tensors $\ov{T}$ with same order is the sum over the indices of the tensors such that an index of color $c$ of a tensor $T$ is summed with another index of color $c$ of a tensor $\ov{T}$. To indicate how the indices are contracted, we  use a $d$-tuple of permutations $\s = (\s_1, \s_2, \cdots, \s_d) \in S^{\times d}_n$, where $S_n$ is known as the symmetric group of $n$ objects. Thus a tensor contraction is denoted $I(\s;T)$ and is written as:
\bea  I(\s;T) = K(\s, \{i_1^{(j)}, \cdots, i_d^{(j)}\}_j, \{\bar i_1^{(k)} ,\cdots, \bar i_d^{(k)}\}_k) \Big[\prod_{j = 1}^n T_{i_1^{(j)} \cdots i_d^{(j)}} \Big]\Big[ \prod_{k=1}^n \ov{T}_{\bar i_1^{(k)} \cdots \bar i_d^{(k)}}\Big]~, 
\eea 
where $K$, called the contraction kernel, is given by
\bea 
K(\s, \{i_1^{(j)}, \cdots, i_d^{(j)}\}_j, \{\bar i_1^{(k)} ,\cdots,\bar i_d^{(k)}\}_k)  = \prod_{l=1}^n \prod_{c=1}^d \delta(i_c^{(l)}, \bar i_c^{(\s_c(l))}) ~,
\eea
where the Kronecker symbol $\delta(i,j) = 1$ if $i=j$, and 0 otherwise.
\end{definition} 

We have provided two explicit examples of definition \ref{def:ContrcUnitaireMonoTenseur} in subsection \ref{subsec:ExamplesAndGraphLink}. A tensor contraction admits a graphical representation that simplifies its comprehension, which we will shortly explore. Nonetheless, the above mathematical formula provides a straightforward algebraic tool for demonstrating the following properties. The next proposition asserts that tensor contractions exhibit two kinds of invariances: they are invariant under the action of the unitary group and under a specific permutation group action, which requires clarification. The proof of this statement is given in appendix \ref{app:Inv}.

\begin{proposition}\label{prop:InvarianceContraction}
Let $d,n \in \N^*$, $n$ tensors $T$ and $n$ tensors $\ov{T}$, each of order $d$ and each defined over a complex vector space $E$ of dimension $N$. Denote by $U(N)$ the unitary group, and consider the natural action of $U(N)^{\otimes d}$ on tensors of order $d$ via the fundamental representation on each index.

Let $\s \in S_n^{\times d}$ and
 let $I(\s; T)$ denote a specific index contraction between the tensors $T$ and $\ov{T}$ determined by $\s$.
 Then, 
\begin{itemize}
\item[(i)] $I(\s; T)$ is invariant under the action of $U(N)^{\otimes d}$ on the tensors $T$ and $\ov{T}$.
\item[(ii)]
 $\forall \g, \r \in S_n$, $I((\r\s_1\g, \cdots, \r\s_d\g); T) = I((\s_1, \cdots, \s_d); T) = I(\s; T)$.
\end{itemize}
\end{proposition}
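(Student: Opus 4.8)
The plan is to prove both items by direct manipulation of the defining sum in Definition~\ref{def:ContrcUnitaireMonoTenseur}, with no recourse to the graphical picture. Item (i) will follow from substituting the transformation law and collapsing each contracted index pair by unitarity; item (ii) will follow from relabelling dummy indices, using that the $n$ copies of $T$ and the $n$ copies of $\ov{T}$ are pairwise identical, so that reordering either product costs nothing.

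For (i), I would let $U^{(c)} \in U(N)$ act on the slot of color $c$, so that by \eqref{eq:chtTenseurBase} with $\Lambda = U^{(c)}$ the transformed coefficients are $T'_{i_1\cdots i_d} = U^{(1)}_{i_1 a_1}\cdots U^{(d)}_{i_d a_d}\,T_{a_1\cdots a_d}$ and $\ov{T}'_{i_1\cdots i_d} = \ov{U}^{(1)}_{i_1 b_1}\cdots \ov{U}^{(d)}_{i_d b_d}\,\ov{T}_{b_1\cdots b_d}$. Substituting into $I(\s;T)$ and summing first over the external indices, the kernel factor $\delta(i_c^{(l)}, \bar i_c^{(\s_c(l))})$ sets the color-$c$ external index of the $l$-th tensor $T$ equal to that of the $\s_c(l)$-th tensor $\ov{T}$; summing the common value $m$ yields, for each $c$ and $l$, the factor $\sum_m U^{(c)}_{m\,a_c^{(l)}}\,\ov{U}^{(c)}_{m\,b_c^{(\s_c(l))}} = \delta(a_c^{(l)}, b_c^{(\s_c(l))})$, which is exactly the unitarity relation $U^{(c)\dagger}U^{(c)}=\id$. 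The surviving deltas reassemble the kernel of $\s$ acting on the untransformed $T,\ov{T}$, so $I(\s;T')=I(\s;T)$.

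For (ii), I would treat the two multiplications separately and then compose them, since they act on disjoint index families. To get $I((\s_1\g,\cdots,\s_d\g);T)=I(\s;T)$, rename the unbarred dummy indices by $i_c^{(l)} = p_c^{(\g(l))}$: the product $\prod_j T$ is merely reindexed by $\g$, while the kernel $\prod_{l,c}\delta(i_c^{(l)},\bar i_c^{(\s_c(\g(l)))})$ becomes $\prod_{l',c}\delta(p_c^{(l')},\bar i_c^{(\s_c(l'))})$ after setting $l'=\g(l)$, i.e.\ the kernel of $\s$. Symmetrically, renaming the barred indices by $\bar i_c^{(k)} = q_c^{(\r^{-1}(k))}$ reindexes $\prod_k \ov{T}$ and turns $\s_c$ into $\r\s_c$. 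Applying both relabellings at once gives $I((\r\s_1\g,\cdots,\r\s_d\g);T)=I(\s;T)$.

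I expect the only real obstacle to be notational: keeping the index families $\{i_c^{(j)}\}$ and $\{\bar i_c^{(k)}\}$ straight and, in (ii), getting the direction of each relabelling right---in particular the placement of $\r$ versus $\r^{-1}$---so that right multiplication lands on the $T$ labels and left multiplication on the $\ov{T}$ labels. A minor but worthwhile point to pin down first is the unitarity convention, since \eqref{eq:chtTenseurBase} fixes which index of $U^{(c)}$ is summed against the tensor and hence that it is $U^{(c)\dagger}U^{(c)}=\id$ (rather than $U^{(c)}U^{(c)\dagger}$) that collapses each contracted pair.
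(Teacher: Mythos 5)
Your proposal is correct and follows essentially the same route as the paper's proof in Appendix A: item (i) by substituting the transformation law and collapsing each contracted index pair via unitarity, and item (ii) by relabelling dummy indices, exploiting the indistinguishability of the $n$ copies of $T$ and of $\ov{T}$. The only (immaterial) difference is how you factor the two-sided action in (ii): you establish left- and right-multiplication invariance separately, whereas the paper establishes conjugation invariance ($I(\g\s\g^{-1};T)=I(\s;T)$) together with right-multiplication invariance ($I(\s\r;T)=I(\s;T)$) and composes the two.
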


We denote the action $(ii)$ more simply as  $I(\r\s\g;T) = I(\s;T)$.

\subsection{Unitary invariants as colored bipartite graphs} \label{subsec:ExamplesAndGraphLink}
In this section, we examine some special cases of tensor contractions and introduce a graphical method to represent them. Since our subsequent developments focus on tensor contractions of various orders, it makes sense to review the representation corresponding to each order incrementally.

A tensor of order 1 is a vector denoted by $\phi=(\phi_i)$. 
We illustrate a unitary contraction for $n=3$ vectors $\phi$ and 3 conjugate vectors $\ov{\phi}$ with the permutation $\s_1=(1)(2)(3)$: 
\bea \label{eq:6vectorContraction}
\begin{aligned}
I(\s_1; \phi) &= \overbrace{ \delta(i_1^{(1)}, \bar i_1^{\s_1(1)})\delta(i_1^{(2)}, \bar i_1^{\s_1(2)})\delta(i_1^{(3)}, \bar i_1^{\s_1(3)}) }^{ K(\s_1, \{i_1^{(j)}\}_{j}, \{\bar i_1^{(k)}\}_{k}), \,\, \, j=1,2,3 \text{ and } k =1,2,3} \phi_{i_1^{1}}\phi_{i_1^{2}}\phi_{i_1^{3}}\ov{\phi}_{\bar i_1^{1}}\ov{\phi}_{\bar i_1^{2}}\ov{\phi}_{\bar i_1^{3}} \\ 
&= \delta(i_1^{(1)}, \bar i_1^{(1)})\delta(i_1^{(2)}, \bar i_1^{(2)})\delta(i_1^{(3)}, \bar i_1^{(3)}) \phi_{i_1^{1}}\phi_{i_1^{2}}\phi_{i_1^{3}}\ov{\phi}_{\bar i_1^{1}}\ov{\phi}_{\bar i_1^{2}}\ov{\phi}_{\bar i_1^{3}} \\ 
&= \phi_{a}\ov{\phi_a}\phi_{b}\ov{\phi_b}\phi_{c}\ov{\phi_c}\\
&= ||\phi||^2||\phi||^2||\phi||^2
\end{aligned}
\eea
where we easily target the kernel $K$  of the contraction. 
Checking that this contraction is stable under $U(N)^{\otimes 3}$ is trivial since the norm is a unitary invariant. By changing the permutation $\s_1$ for any other element of  $S_3$, we get the same contraction.

We can associate a graph with this contraction. Each vector $\phi$  is associated with a vertex, say white, with a half-edge, and each vector $\ov{\phi}$ is associated with a vertex, say black, with a half-edge. When a vector $\phi$ and $\ov{\phi}$ have the same contraction index, the two half-edges are connected. An illustration is given in figure \ref{fig:Cntrc3Vec}.
\begin{figure}[ht]
\center
\includegraphics[scale=0.20]{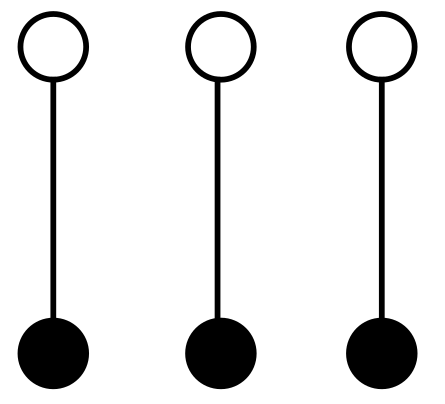}
\caption{Graph associated with the vector contraction $I(\s_1; \phi)$,
$\s_1=(1)(2)(3)$.}
\label{fig:Cntrc3Vec}
\end{figure}
Graphs representing contractions of $2n$ vectors, as their norm, simplify to bipartite segments.

Let us examine tensors of order 2, specifically matrices. Take $M$ as a matrix with components $(M_{ij})$. For a unitary contraction involving $n = 2$ matrices $M$ and $2$ matrices $\ov{M}$, we focus on the 2-tuple of permutations
\bea \label{eq:SforM2}
\s= (\s_1=(1)(2), \s_2=(12)) 
\eea 
\bea \label{eq:4matricxContraction}
\begin{aligned}
I(\s; M) =& \overbrace{\delta(i_1^{(1)}, \bar i_1^{\s_1(1)})\delta(i_2^{(1)}, \bar i_2^{\s_2(1)})\delta(i_1^{(2)}, \bar i_1^{\s_1(2)})\delta(i_2^{(2)}, \bar i_2^{\s_2(2)})}^{K(\s_1, \{i_1^{(j)}, i_2^{(k)}\}_{j}, \{\bar i_1^{(k)}, \bar i_2^{(k)}\}_{k}), \,\, \, j=1,2 \text{ and } k =1,2} \\
&M_{i_1^{(1)}i_2^{(1)}}M_{i_1^{(2)}i_2^{(2)}}\ov{M}_{\bar i_1^{(1)} \bar i_2^{(1)}}\ov{M}_{\bar i_1^{(2)} \bar i_2^{(2)}}\\
=& \delta(i_1^{(1)}, \bar i_1^{(1)})\delta(i_2^{(1)}, \bar i_2^{(2)})\delta(i_1^{(2)}, \bar i_1^{(2)})\delta(i_2^{(2)}, \bar i_2^{(1)}) \\
&M_{i_1^{(1)}i_2^{(1)}}M_{i_1^{(2)}i_2^{(2)}}\ov{M}_{\bar i_1^{(1)} \bar i_2^{(1)}}\ov{M}_{\bar i_1^{(2)} \bar i_2^{(2)}}\\
=&M_{ab}M_{cd}\ov{M}_{ad}\ov{M}_{cb} \\
=& \Tr([MM^{\dag}]^2)
\end{aligned}
\eea
As with vectors, one simply proves that the contractions by traces of products of matrices $MM^{\dag}$ is stable under unitary group. 

We represent a matrix $M$ (resp. $\ov{M}$)) by a vertex, for example, white (resp. black), with two half-edges of different colors. If a matrix $M$ and $\ov{M}$ have the same contraction index on the same color, then we connect the two half-edges of the same color. An illustration of $I(\s; M)$ with 
$\s$ given by \eqref{eq:SforM2} appears  in figure \ref{fig:Cntrc2Mat}.
\begin{figure}[ht]
\center
\includegraphics[scale=0.50]{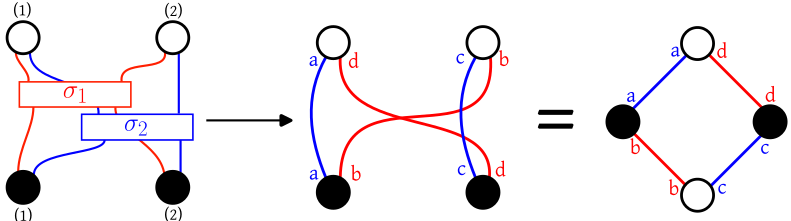}
\caption{Graph associated with the matrix contraction $I(\s; M)$, 
with $\s=(\s_1=(1)(2), \s_2=(12))$.}
\label{fig:Cntrc2Mat}
\end{figure}
We conclude that matrix contractions of trace forms yield
bipartite cycles and a collection of those. 

The generalization of the graphical representation of tensor contractions of order $d$ is done similarly: consider $n$ tensors $T$ and $n$ tensors $\ov{T}$ of order $d$ that are contracted by the $d$-tuple $\s = (\s_1,\cdots, \s_d)$. A tensor $T$ (resp. $\ov{T}$) is represented by a vertex, say white (resp. black), with $d$ half-edges of different colors. If a tensor $T$ and a tensor $\ov{T}$  share a contraction index of the same color $i_c^{(j)}$, then the half-edge of color $c$ of $T$ is connected to the half-edge of color $c$ of $\ov{T}$. Examples are given in figure \ref{fig:GrphVarie}. This construction rule produces bipartite 
colored graphs 
with fixed valence determined by the order $d$ of the tensor $T$.

It is important to highlight a key aspect. The pair of graphs on the left of figure \ref{fig:Cntrc2Mat} provides a graphical interpretation of the gauge group invariance described in point $(ii)$ of proposition \ref{prop:InvarianceContraction}. To formally represent a contraction, it is necessary to use permutations and assign numbers to the vertices. The $d$-tuple of permutations acts on these indices, connecting half-edge to half-edge. However, this numbering process creates distinguishable vertices from those that were initially indistinguishable—a concept related to the bosonic symmetry of $n$ identical tensors $T$ and $n$ identical tensors $\ov{T}$.
To express the contractions formally, labels are introduced, but these labels are ultimately removed by grouping the contractions into equivalence classes. At the graphical level, gauge invariance simply corresponds to invariance under the relabeling of vertices.

\begin{figure}[ht]
\center
\includegraphics[scale=0.23]{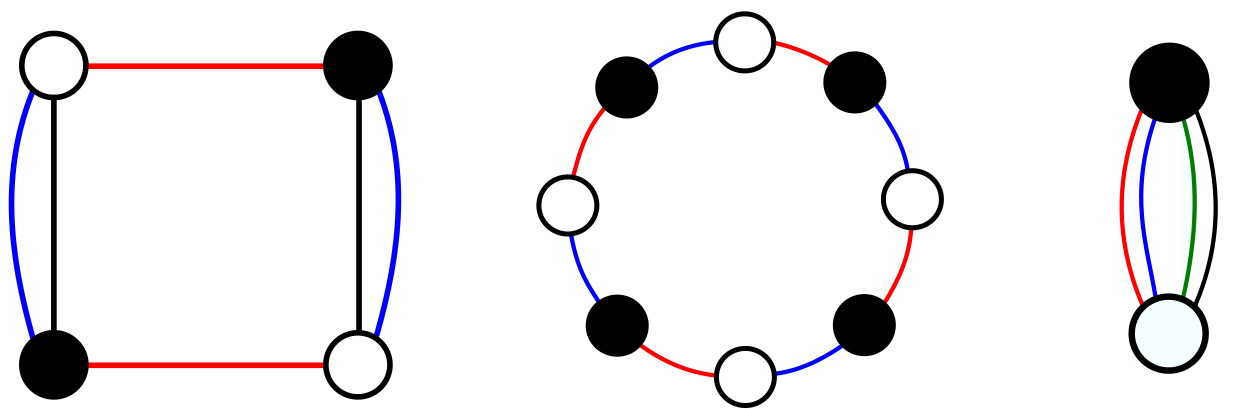}
\caption{Illustration of $T_{\color{blue}a\color{black}b\color{red}c}T_{\color{blue}d\color{black}e\color{red}f}\ov{T}_{\color{blue}a\color{black}b\color{red}f}\ov{T}_{\color{blue}d\color{black}e\color{red}c}$ (left), $M_{\color{red}a\color{blue}b}M_{\color{red}c\color{blue}d}M_{\color{red}e\color{blue}f}\ov{M}_{\color{red}a\color{blue}d}\ov{M}_{\color{red}e\color{blue}b}\ov{M}_{\color{red}c\color{blue}f}$ (middle) and $T_{\color{red}a\color{blue}b\color{OliveGreen}c\color{black}d}\ov{T}_{\color{red}a\color{blue}b\color{OliveGreen}c\color{black}d}$ (right).}
\label{fig:GrphVarie}
\end{figure}

\subsection{Counting invariants}\label{sub:comptageInvMonoOrder}
The main tool for counting is Burnside's lemma and the orbit-stabilizer theorem.
We recall these pivotal propositions in appendix \ref{app:Inv}.

Below, we define the gauge group action that describes the invariance of unitary contractions according to point $(ii)$ of proposition \ref{prop:InvarianceContraction}.

\begin{definition}[Group action]\label{def:GroupActionMonoOrder}

Let $d,n \in \mathbb{N}^*$ and $H_1 = H_2=S_n$. We define the group action $\alpha^d_n$ on $(S_n)^{\times d}$ by:
\bea \alpha^d_n: \left\{ \begin{array}{ll}
H_1 \times H_2 \times (S_n)^{\times d} \rightarrow (S_n)^{\times d}\\
(\gamma, \rho, \sigma) \mapsto \gamma \sigma \rho = (\gamma\sigma_1\rho,\gamma\sigma_2\rho, \cdots, \gamma\sigma_d\rho)
\end{array} \right. \eea

\end{definition}

Proposition \ref{prop:InvarianceContraction} shows that any contraction $I(\s; T)$ is invariant under this group action.

\subsubsection{Counting formula}
\begin{theorem}[Number of orbits of unitary invariants of order $d$ \cite{BenGeloun:2013lim}]\label{theo:JBandSajaye}
Let $d, n \in \mathbb{N}^*$. We denote by $Z^d_n$ the number of orbits of the action $\alpha^d_n$. We have 
\bea 
Z^d_n = \sum_{p ~ \vdash ~ n} \Sym(p)^{d-2}. \eea
\end{theorem}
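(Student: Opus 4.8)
The plan is to count orbits of the action $\alpha^d_n$ via Burnside's lemma, so that
\be
Z^d_n = \frac{1}{|H_1 \times H_2|}\sum_{(\g,\r)\in S_n\times S_n} \big|\mathrm{Fix}(\g,\r)\big|,
\ee
where $\mathrm{Fix}(\g,\r)$ is the set of $d$-tuples $\s=(\s_1,\dots,\s_d)\in(S_n)^{\times d}$ fixed by $(\g,\r)$, i.e.\ satisfying $\g\s_c\r=\s_c$ for every color $c$. Since the condition decouples across colors, the fixed-point set factorizes and $|\mathrm{Fix}(\g,\r)| = \prod_{c=1}^{d}\,\#\{\s_c : \g\s_c\r=\s_c\}$, where each factor is the \emph{same} count $F(\g,\r):=\#\{\tau\in S_n : \g\tau\r=\tau\}$. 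Hence $|\mathrm{Fix}(\g,\r)| = F(\g,\r)^d$, and the whole computation reduces to understanding this single elementary count.

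First I would solve for $\tau$ in $\g\tau\r=\tau$. Rewriting this as $\g\tau = \tau\r^{-1}$, I recognize it as the condition that $\tau$ conjugates $\r^{-1}$ to $\g$, namely $\tau\r^{-1}\tau^{-1}=\g$. By the standard theory of conjugacy in $S_n$, solutions $\tau$ exist if and only if $\g$ and $\r^{-1}$ are conjugate, i.e.\ have the same cycle type $p\vdash n$; and when they do, the solution set is a coset of the centralizer of $\r^{-1}$, so its cardinality equals $|\Stab|$ for that cycle type. Denoting by $\Sym(p)=\prod_i i^{m_i}\,m_i!$ the order of the centralizer of a permutation of cycle type $p$ (with $m_i$ parts equal to $i$), I get
\be
F(\g,\r) = \begin{cases} \Sym(p) & \text{if } \g \text{ and } \r^{-1} \text{ both have cycle type } p,\\ 0 & \text{otherwise.}\end{cases}
\ee

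Next I would assemble the Burnside sum. Because $F(\g,\r)$ vanishes unless $\g$ and $\r^{-1}$ share a common cycle type $p$, the double sum collapses to a sum over partitions $p\vdash n$ of the contribution from pairs $(\g,\r)$ with $\g,\r^{-1}$ both of type $p$. The number of such $\g$ is the conjugacy-class size $n!/\Sym(p)$, and likewise the number of admissible $\r$ is $n!/\Sym(p)$ (inversion preserves cycle type, so $\r^{-1}$ of type $p$ is equivalent to $\r$ of type $p$). Each such pair contributes $F(\g,\r)^d=\Sym(p)^d$. Therefore
\be
Z^d_n = \frac{1}{(n!)^2}\sum_{p\,\vdash\,n}\Big(\frac{n!}{\Sym(p)}\Big)\Big(\frac{n!}{\Sym(p)}\Big)\,\Sym(p)^d = \sum_{p\,\vdash\,n}\Sym(p)^{d-2},
\ee
which is the claimed formula. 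The only genuinely delicate point is the count $F(\g,\r)$: one must argue carefully that the solution set of $\tau\r^{-1}\tau^{-1}=\g$ is either empty or a centralizer coset of size exactly $\Sym(p)$, and confirm the bookkeeping that both $\g$ and $\r$ range over a full conjugacy class of size $n!/\Sym(p)$. Everything else is routine substitution and cancellation of the $(n!)^2$ factor from Burnside.
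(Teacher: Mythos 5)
Your proposal is correct and follows essentially the same route as the paper: Burnside's lemma, factorization of the fixed-point count over the $d$ colors, recognition that $\gamma\tau\rho=\tau$ forces $\gamma$ and $\rho^{-1}$ to be conjugate with solution set a centralizer coset of size $\Sym(p)$, and the final collapse via conjugacy-class sizes $n!/\Sym(p)$. The only cosmetic difference is that the paper invokes the ambivalence of $S_n$ where you simply note that inversion preserves cycle type; these are the same observation.
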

\begin{proof} Let $G = S_n^{\times d}$, recall that $H_1=H_2=S_n$ and define $\delta: G \rightarrow G$  which is 1 on the identity and 0 everywhere else. Starting from Burnside's lemma (see proposition \ref{prop:Burnside}) for the action $\alpha^d_n$, one obtains
\bea
\begin{aligned}
Z^d_n &= \frac{1}{|H_1 \times H_2|} \sum_{(\gamma, \rho) \in H_1 \times H_2} |\left\{\sigma=(\sigma_1, \cdots, \sigma_d) \in G ~|~ \gamma\sigma \rho = \sigma \right\}|\\
&=  \frac{1}{|H_1|~|H_2|} \sum_{(\gamma, \rho) \in H_1 \times H_2} ~\sum_{\sigma_1, \cdots, \sigma_d \in S_n} \prod_{i = 1}^d \delta(\gamma\sigma_i\rho\sigma_i^{-1}).
\end{aligned}
\eea
Noting that $\delta(\gamma\sigma_i\rho\sigma_i^{-1}) = 1$
entails that $\gamma$ and $\rho^{-1}$ are conjugate permutations. Since $\rho$ and $\rho^{-1}$ are necessarily conjugate ($S_n$ is an ambivalent group), 
we infer that $\gamma$ and $\rho$ are conjugate. 
Therefore, we add this information by saying that
\bea \delta_c(\gamma, \rho)\delta(\gamma\sigma_i\rho\sigma^{-1}) = \delta(\gamma\sigma_i\rho\sigma^{-1})~, \eea
by defining $\delta_c: H_1 \times H_2 \rightarrow \{0, 1\}$ which is 1 if $\gamma$ and $\rho$  are conjugate, 0 otherwise. It becomes interesting to change the summation order according to the conjugacy classes. Since $S_n$ 
gets partionned in conjugacy classes $C(\mu)$,
where $C(\mu)$ is the conjugacy class associated with the partition $\mu$ of $n$, we have
\bea \sum_{(\gamma, \rho) \in H_1 \times H_2} = \sum_{\gamma \in H_1} \sum_{\rho \in H_2} = \sum_{\mu ~ \vdash ~ n} \sum_{\nu ~ \vdash ~ n} \sum_{\gamma \in C(\mu)} \sum_{\rho \in C(\nu)}~,\eea
and 
\bea
\begin{aligned}
Z^d_n &=  \frac{1}{|H_1|~|H_2|} \sum_{\mu ~ \vdash ~ n} \sum_{\gamma \in C(\mu)} \sum_{\rho \in C(\mu)} ~ \prod_{i = 1}^d \left( \sum_{\sigma_i \in S_n}\delta(\gamma\sigma_i\rho\sigma_i^{-1}) \right).
\end{aligned}
\eea
For a conjugate pair $\gamma$ and $\rho^{-1}$, there exists $\eta \in S_n$ such that $\rho^{-1} = \eta^{-1} \gamma \eta$ 
and 
\bea \sum_{\sigma \in S_n} \delta(\gamma\sigma\rho\sigma^{-1}) = \sum_{\sigma \in S_n} \delta(\gamma(\sigma\eta)\gamma^{-1}(\sigma\eta)^{-1}) = \sum_{\sigma \in S_n} \delta(\gamma\sigma\gamma^{-1}\sigma^{-1})~, \eea
where, for the last line, we used 
$\delta(\gamma\sigma\gamma^{-1}\sigma^{-1}) = 1$. This also means that $\sigma \in \Stab(\gamma)$, where $\Stab(\gamma)$ stands for 
 the stabilizer of $\g$ associated with the conjugate action. 
 This boils down to: 
\bea 
\sum_{\sigma \in S_n} \delta(\gamma\sigma\gamma^{-1}\sigma^{-1}) = |\Stab(\gamma)| \equiv \Sym(\gamma).
\eea
Taking this into account, we re-express the main counting formula as
\bea 
Z^d_n = \frac{1}{|H_1|~|H_2|} \sum_{\mu ~ \vdash ~ n} \sum_{\gamma \in C(\mu)} \sum_{\rho \in C(\mu)} ~ \Sym(\gamma)^d. 
\eea
Since the symmetry factor of a permutation depends only on its conjugacy class:
\bea
\begin{aligned}
Z^d_n &= \frac{1}{|H_1|~|H_2|} \sum_{\mu ~ \vdash ~ n} \sum_{\gamma \in C(\mu)} \sum_{\rho \in C(\mu)} ~ \Sym(C(\mu))^d\\
&= \frac{1}{|H_1|~|H_2|} \sum_{\mu ~ \vdash ~ n} ~ |C(\mu)|^2\,  \Sym(C(\mu))^d.
\end{aligned}
\eea
Now apply the orbit-stabilizer theorem (see proposition \ref{prop:orbstab}) for the inner automorphism action
to obtain 
\bea |C(\mu)|\Sym(\mu) \equiv |C(\mu)|\Sym(C(\mu)) = |S_n|. \eea
Finally, dealing with diagonal actions with the same groups $H_1 = H_2 = S_n$, we obtain
\bea Z^d_n = \frac{1}{|H_1|~|H_2|} \sum_{\mu ~ \vdash ~ n} ~ \frac{|S_n|^2}{\Sym(\mu)^2} \Sym(\mu)^d = \sum_{\mu ~ \vdash ~ n} ~ \Sym(\mu)^{d-2}. \eea
\end{proof}

\subsubsection{Comments on low orders}
For $d=1$, the formula gives: 
\bea Z^1_n = \sum_{\mu ~ \vdash ~n} \frac{1}{\Sym(\mu)} = \frac{1}{|S_n|}\sum_{\mu ~ \vdash ~n} |C(\mu)| = 1
\eea 
confirming that there  is only one vector invariant for all $n\ge 1$. 
At $d=2$, one gets $Z^2_n = \sum_{\mu ~ \vdash ~n}  1 =p(n) $ the number of partitions of $n$. This coincides with the number of ways
of splittings the product of $n$ matrices $M$  and $n$ matrices $M^\dag$
to make distinct trace products.  

\section{Counting unitary contractions of multiple-orders}
\label{sec:ordremult}

This section undertakes the analysis of multiple-order tensor invariants. We begin with a series of definitions that lay the foundation for the concept of multiple-order tensor contraction and its graphical representation. Using this formalism, we derive a counting formula for these new invariants.

\subsection{Encoding a bipartite colored graph}\label{subsec:EncodingGraph}

In the rest of the section, $\forall k \in  \N^*$, $[k]$ denotes the integer interval $\inter{1}{k}$. Consistently, $P([k])$ is  power set of $\inter{1}{k}$.
Refereing to the order $d$ of a tensor, 
an element of $[d]$ is called color.

\begin{definition}[Colored vertex and type]\label{def:coloredVertex} 
Let $d \in \N^*$. Let $V$ be a set, the elements of which are called vertices. A colored vertex is a pair $(v,A)$, where $v\in V$ 
and where $A \in P([d])$. The set $A$ is called the  color type of the vertex $v$, 
and its cardinality $|A|$ indicates the number of half-edges attached to the vertex $v$ (\textit{i.e.} the valence of $v$). 
The elements of $A$ are called colors, each color is associated with a unique half-edge of $v$. 

We call a color type function $\phi$ over $V$ a function $\phi: V \rightarrow P([d])$, which satisfies 
$\phi^{-1}(\varnothing) = \varnothing$.  
More simply, $\phi$ is called a type. A colored set of vertices with $d$ colors is a set of vertices equipped with a type $\phi:V \rightarrow P([d])$ noted down as:
\bea
\Lambda = (V, \phi)\,. 
\eea
\end{definition}

Given a colored set of vertices with $d$ colors, $\Lambda = (V, \phi)$, we assign to each vertex $v \in V$ the colored vertex $(v, \phi(v))$.  
We also note that if we use a type  $\phi: V \to P([d])$,  
then for any $d' \geq d$, one can define another type  
$\phi': V \to P([d'])$ that assigns the same colors to the vertices as $\phi$, \textit{\i.e}., for any $v\in V$, $\phi'(v) = \phi(v)$.  
Therefore, $d$ should be chosen as the minimal number of colors 
for which the definition is meaningful.

\begin{definition}[Colored section, color multiplicity and chromatic index]
\label{def:labelingColorMult} Let $V$ be a set of vertices and $\Lambda = (V, \phi)$ be a colored set of vertices with 
$d$ colors. 
A colored section $V|c$, with $c \in [d]$, is the set of all colored vertices of $V$ having the color $c$:
\bea 
V|c = \{ v \in V\, | \, c \in \phi(v) \}.
\eea
The color multiplicity of the color $c \in [d]$ for $\Lambda$ is the cardinality of $V|c$, \textit{i.e.}:
\bea 
\forall c \in [d],\,  m_c = |\, V|c \,|. 
\eea
$d$ is said to be the chromatic index if and only if 
\bea \forall c \in [d],\, m_c > 0 .\eea
\end{definition}

One realizes that if $d$ is the chromatic index, 
then each color appears at least once in the color type 
of some vertex. This means that all colors 
in $[d]$ are used.

\begin{definition}[Compatible colored sets of vertices] Let $\Lambda= (V, \phi)$, $\Lambda'=(V', \phi')$ be two colored sets of vertices of chromatic index $d \in \N^*$. Then, $\Lambda$ and $\Lambda'$ are said to be compatible if they have disjoint sets of vertices and share the same multiplicity for each color, in other words: 
\bea
&& (i) \quad \, V \cap V' = \varnothing, \crcr
&& (ii)\quad \, \forall c \in [d],\, m_c = m'_c \,. 
\eea
\end{definition}

\begin{definition}[Colored bijection and bipartite colored graph]\label{def:bipartiteColoredGraph}
Let $\Lambda_1 = (V_1, \phi_1)$, $\Lambda_2=(V_2, \phi_2)$ be two compatible colored sets of vertices of chromatic index $d \in \N^*$. For any color $c \in [d]$, let $S(c)$ denote the set of bijections from $V_1 | c$ to $V_2|c$. A colored bipartite graph $G$ of chromatic index $d$ is the data of $\Lambda_1$, $\Lambda_2$ and an element $\s$ of $S = \bigtimes_{c= 1}^d S(c)$. Explicitly, 
\bea G = (\Lambda_1, \Lambda_2, \s).\eea 
\end{definition}
    
The above definition coincides with the standard notion of an edge-colored bipartite graph $G$, whose vertex set is the disjoint union of two sets $V_1$ and $V_2$. It remains to specify the set of edges in this context, along with the associated coloring.

\begin{proposition} \label{prop:edges}Let $G= (\Lambda_1, \Lambda_2, \s)$ be a colored bipartite graph of chromatic index $d \in \N^*$ with $\Lambda_1 = (V_1, \phi_1)$, $\Lambda_2 = (V_2, \phi_2)$ and $\s = (\s_1, \dots, \s_d) \in  S = \bigtimes_{c= 1}^d S(c)$. Let $E$ be the set of pairs defined
by 
\bea 
\label{eq:edgesSet} 
E = \left\{ (v, \s_c(v)) \, | \,  c \in [d] \text{ and } v \in V_1|c\right\}.
\eea
Then, 
\bea 
&& (i) \quad E = \bigcup_{c = 1}^d \left\{ (v, \s_c(v)) \, | \,  v \in V_1|c\right\} = \bigcup_{v \in V_1} \left\{ (v, \s_c(v)) \, | \, c \in \phi_1(v)\right\} \crcr
&& (ii) \quad E = \bigcup_{c = 1}^d \left\{ (\s_c^{-1}(v), v) \, | \,  v \in V_2|c\right\} = \bigcup_{v \in V_2} \left\{ (\s^{-1}_c(v), v) \, | \, c \in \phi_2(v)\right\} 
\eea
\end{proposition}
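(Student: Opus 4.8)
The plan is to prove both chains of equalities by reorganizing the double index $(c,v)$ ranging over colors $c\in[d]$ and vertices, using the definition of the colored section together with the bijectivity of each $\sigma_c$; the content is purely combinatorial bookkeeping.

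First, for $(i)$, the leftmost equality is obtained by unpacking the set-builder notation defining $E$. The constraint ``$c\in[d]$ and $v\in V_1|c$'' is a conjunction, so partitioning the generated pairs according to the value of $c$ and taking the union over all colors reproduces $E$ verbatim. For the second equality in $(i)$, I would perform a Fubini-type exchange of the order in which $(c,v)$ is traversed. The essential input is the very definition of the colored section (Definition \ref{def:labelingColorMult}), i.e. the equivalence
\[
v \in V_1|c \iff c \in \phi_1(v).
\]
Thus $\{(c,v):c\in[d],\,v\in V_1|c\}=\{(c,v):v\in V_1,\,c\in\phi_1(v)\}$, and grouping the associated edges by $v$ rather than by $c$ yields the right-hand union.

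For $(ii)$, I would reindex each color-$c$ block using that $\sigma_c\colon V_1|c\to V_2|c$ is a bijection (here the compatibility of $\Lambda_1$ and $\Lambda_2$, which forces $m_c=m_c'$ and hence guarantees that such bijections exist, is what makes $\sigma_c$ well defined and invertible). Setting $w=\sigma_c(v)$, the assignment $v\mapsto w$ maps $V_1|c$ bijectively onto $V_2|c$, so $\{(v,\sigma_c(v)):v\in V_1|c\}=\{(\sigma_c^{-1}(w),w):w\in V_2|c\}$. Unioning over $c$ produces the first equality of $(ii)$, and a second application of the colored-section equivalence, now in the form $w\in V_2|c\iff c\in\phi_2(w)$, swaps $c$ and $w$ to give the final expression.

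The argument is essentially a relabeling exercise, so I anticipate no deep obstacle. The one step that deserves genuine care is the reindexing in $(ii)$: one must check that substituting $w=\sigma_c(v)$ is a faithful relabeling of the same set of edges and not a modification of it, which is precisely what the bijectivity of $\sigma_c$ ensures.
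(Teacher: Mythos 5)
Your proposal is correct and follows essentially the same route as the paper's own (very brief) proof: both rest on the equivalence $v \in V_1|c \Leftrightarrow c \in \phi_1(v)$ to swap the order of the double index for $(i)$, and on the bijectivity of each $\sigma_c: V_1|c \to V_2|c$ to reindex for $(ii)$. Your version simply spells out the bookkeeping more explicitly than the paper does.
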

\begin{proof}
One easily proves the statement, keeping in mind that,  
\bea  
\forall c \in [d],\, \forall v \in V_1, \,\,\,\, v \in V_1 |c \Leftrightarrow c \in \phi_1(v)\,.  
\eea
Claim $(i)$ follows by unpacking the definition \ref{eq:edgesSet}, and the fact that the set of colors partitions the edge set. To establish the equivalence between $(i)$ and $(ii)$, one uses the bijection between $V_1|c$ and $V_2|c$.
\end{proof}

\begin{proposition} \label{prop:linkWithCOnventionalGraph}
Let $G= (\Lambda_1, \Lambda_2, \s)$ be a colored bipartite graph of chromatic index $d \in \N^*$, with $\Lambda_1 = (V_1, \phi_1)$, $\Lambda_2 = (V_2, \phi_2)$ and $\s = (\s_1, \dots, \s_d) \in S = \bigtimes_{c= 1}^d S(c)$.  Then,  the graph $(V,E)$ formed from the set of vertices $V = V_1 \cup V_2$ and the set of edges given by $E=\bigcup_{c=1}^d \left\{ (v,\s_c(v)) \, | \, v \in V_1|c \right\}$ is an 
edge-colored bipartite graph of chromatic index $d$,
in the conventional sense. 
\end{proposition}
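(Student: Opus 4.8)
The plan is to verify, one at a time, the three features that make $(V,E)$ an edge-colored bipartite graph of chromatic index $d$ in the conventional sense: that it is bipartite with parts $V_1$ and $V_2$, that assigning to each edge produced by $\s_c$ the color $c$ gives a proper edge coloring, and that all $d$ colors actually occur. Everything will be read off from the two descriptions of $E$ already recorded in Proposition \ref{prop:edges}.

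First I would settle bipartiteness. Since $\Lambda_1$ and $\Lambda_2$ are compatible, $V_1 \cap V_2 = \varnothing$, so $V = V_1 \cup V_2$ is a genuine disjoint union. Every element of $E$ has the form $(v, \s_c(v))$ with $v \in V_1|c \subseteq V_1$, and since $\s_c$ is a bijection from $V_1|c$ to $V_2|c$ we have $\s_c(v) \in V_2|c \subseteq V_2$. Hence each edge joins a vertex of $V_1$ to a vertex of $V_2$, and no edge stays inside a single part, so $(V,E)$ is bipartite. I would also note in passing that no vertex is isolated: the constraint $\phi^{-1}(\varnothing) = \varnothing$ from Definition \ref{def:coloredVertex} forces every vertex to carry at least one color, hence at least one incident edge.

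Next I would introduce the coloring by declaring that the edge $(v, \s_c(v))$ coming from $\s_c$ has color $c$, so that $E$ decomposes into the color classes $E_c = \{(v, \s_c(v)) \,|\, v \in V_1|c\}$ as in Proposition \ref{prop:edges}. The key step is \emph{properness}: two edges meeting at a common vertex must never share a color. At a vertex $v \in V_1$, the description $E = \bigcup_{v \in V_1}\{(v, \s_c(v)) \,|\, c \in \phi_1(v)\}$ of Proposition \ref{prop:edges}$(i)$ shows that the edges incident to $v$ are indexed by the colors $c \in \phi_1(v)$, with exactly one edge per color because $\s_c$ is a function; thus distinct incident edges carry distinct colors. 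Symmetrically, at a vertex $w \in V_2$ the dual description $E = \bigcup_{w \in V_2}\{(\s_c^{-1}(w), w) \,|\, c \in \phi_2(w)\}$ of Proposition \ref{prop:edges}$(ii)$ shows the incident edges are indexed by $c \in \phi_2(w)$, again one per color, now using the injectivity of $\s_c$ so that $\s_c^{-1}(w)$ is unambiguous. Hence the coloring is proper at every vertex. Parallel edges of different colors between the same pair $v,w$ are allowed, as in the rightmost graph of Figure \ref{fig:GrphVarie}, and they cause no conflict precisely because the color is tracked through the decomposition $E = \bigcup_c E_c$.

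Finally I would check that all $d$ colors are realized: since $d$ is the chromatic index, Definition \ref{def:labelingColorMult} gives $m_c = |\,V_1|c\,| > 0$ for every $c \in [d]$, so each class $E_c$ is nonempty and color $c$ genuinely appears. Combining the three steps, $(V,E)$ equipped with this coloring is a properly edge-colored bipartite graph in which each of the $d$ colors of $[d]$ occurs, which is exactly what it means for $(V,E)$ to be an edge-colored bipartite graph of chromatic index $d$ in the conventional sense. I expect the only delicate point to be the properness argument, and specifically making explicit that the two-sided description of $E$ in Proposition \ref{prop:edges}, together with $\s_c$ being a bijection, is precisely what rules out two equally-colored edges at a common vertex; the bipartiteness and the realization of all colors are then routine bookkeeping.
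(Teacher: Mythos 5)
Your proof is correct and follows essentially the same route as the paper's: bipartiteness from the disjointness guaranteed by compatibility, and properness of the coloring read off from the two-sided description of $E$ in Proposition \ref{prop:edges} together with the fact that the type function gives at most one incident edge per color at each vertex. The additional checks you include (that no vertex is isolated, and that every color in $[d]$ is realized because $d$ is the chromatic index) are slightly more explicit than the paper's write-up but do not change the argument.
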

\begin{proof}
By the definition of compatible sets of colored vertices, we have $V_1 \cap V_2 = \varnothing$, so that $V_1$ and $V_2$ form a partition of the vertex set $V = V_1 \cup V_2$. From the definition of $E$, each edge $(v, \sigma_c(v)) \in E$ is incident to one vertex in $V_1$ and one in $V_2$, which ensures that $(V, E)$ is bipartite.

A graph is edge-colored if and only if no two edges incident to the same vertex share the same color. We define the color of each edge as follows: an edge $e = (v, \sigma_c(v))$, with $c \in [d]$ and $v \in V_1|c$, is assigned color $c$. Proposition~\ref{prop:edges} allows us to write:
\bea
\bigcup_{v \in V_1} \left\{ (v, \sigma_c(v)) \mid c \in \phi_1(v) \right\} = E = \bigcup_{v \in V_2} \left\{ (\sigma_c^{-1}(v), v) \mid c \in \phi_2(v) \right\}.
\eea
The type function $\phi$ guarantees that, pairwise, the half-edges incident to the same vertex have distinct colors.
\end{proof}

We have introduced definitions that are compatible with both graph theory and tensor contractions. The next step is to embed these constructions within
a single framework.

\subsection{Tensor contractions of multiple-orders}\label{subsec:defMulOrdContrac}
This section is devoted to the formulation of tensor contractions of arbitrary order. The formalism will be developed in close connection with the constructions introduced in the previous section.

The following statement holds: 
\begin{proposition}[Equivalence classes of vertices]
\label{def:typeEquivRelation} 
Let $(V, \phi)$ be a colored set of vertices of chromatic index $d \in \mathbb{N}^*$. The relation on the vertex set $V$ defined by
\bea
\forall v, w \in V, \;\; v \, R_T \, w \;\Leftrightarrow\; \phi(v) = \phi(w)
\eea
is an equivalence relation.
\end{proposition}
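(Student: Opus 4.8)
The plan is to observe that $R_T$ is nothing more than the \emph{fiber equivalence} (or kernel) of the type function $\phi : V \to P([d])$: two vertices are $R_T$-related exactly when they share the same image under $\phi$. Since any relation of the form $v \sim w \Leftrightarrow f(v) = f(w)$, for an arbitrary function $f$ into an arbitrary codomain, is automatically an equivalence relation, the entire proof reduces to transporting the three axioms from equality of subsets in $P([d])$ back to $V$ through $\phi$. I would therefore open the proof by recalling that $\phi$ is a well-defined \emph{total} function on $V$ (guaranteed by Definition \ref{def:coloredVertex}; in particular the condition $\phi^{-1}(\varnothing) = \varnothing$ ensures each $\phi(v)$ is a genuine element of $P([d])$), so that $\phi(v)$ is unambiguously defined for every $v \in V$.

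I would then verify the three properties in turn, each a one-line reduction to a property of set equality. For reflexivity, for every $v \in V$ we have $\phi(v) = \phi(v)$, hence $v \, R_T \, v$. For symmetry, if $v \, R_T \, w$ then $\phi(v) = \phi(w)$, and since equality of subsets is symmetric $\phi(w) = \phi(v)$, giving $w \, R_T \, v$. For transitivity, if $v \, R_T \, w$ and $w \, R_T \, u$ then $\phi(v) = \phi(w)$ and $\phi(w) = \phi(u)$, so by transitivity of set equality $\phi(v) = \phi(u)$, i.e. $v \, R_T \, u$.

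There is essentially no obstacle here: the statement is a formality whose whole content is that equality on the codomain $P([d])$ is itself reflexive, symmetric, and transitive, and that these properties pull back along any map. The only point deserving a moment's care is the total-function status of $\phi$ noted above; once that is in hand the three checks are immediate. I would close with the remark that the resulting quotient $V / R_T$ partitions the vertex set into classes of equal color type, which is precisely the grouping the subsequent multiple-order contraction formalism will exploit.
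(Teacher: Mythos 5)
Your proof is correct and matches the paper, which simply declares the proposition ``direct'': the relation is the kernel of the map $\phi$, and reflexivity, symmetry, and transitivity pull back from equality in $P([d])$ exactly as you write. (One minor remark: the condition $\phi^{-1}(\varnothing)=\varnothing$ is not actually needed here, since $\varnothing$ is itself a legitimate element of $P([d])$; totality of $\phi$ alone suffices.)
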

The proof of the above proposition is direct.
This statement informs us that two vertices are equivalent if and only if they share the same type. The  corresponding quotient set $V/R_T$ is 
\bea V/R_T  = (\phi^{-1}(A))_{A \in P([d])}
\eea

\begin{definition}[Tensor associated with a colored vertex]
Let $(V, \phi)$ be a colored set of vertices of chromatic index $d \in \mathbb{N}^*$. Let $E$ be a complex vector space of dimension $N$. To each $v \in V$, we assign a tensor
\[
T^{(v)} : E^{\times |\phi(v)|} \longrightarrow \mathbb{C}
\]
of order $p = |\phi(v)|$. In a fixed basis of $E$, if $\phi(v) = \{c_1, \dots, c_p\}$, the components of $T^{(v)}$ are denoted by $T^{(v)}_{i_{c_1} \cdots i_{c_p}}$, where for all $k \in [p]$, we have $i_{c_k} \in [N]$, and $i_{c_k}$ is the index corresponding to the color $c_k \in \phi(v)$.

We refer to the collection $(T^{(v)})_{v \in V}$ as a colored family of tensors associated with the colored set of vertices $(V, \phi)$.
\end{definition}

 Consistently with proposition \ref{def:typeEquivRelation}, one  defines an equivalence relation in the colored family of
 tensors associated with $(V,\phi)$: 
\bea \label{eq:tensorEquivalence}
\forall v,w \in V, \quad \, T^{(v)} \, R_T \, T^{(w)} \Leftrightarrow v \, R_T \, w.
\eea
This means that two tensors are considered equivalent if their associated vertices share the same color type.

In what follows, equivalent tensors will be regarded as indistinguishable and hence identified: 
\bea \label{eq:indistinguishableTensor} 
 T^{(v)} \, R_T \, T^{(w)} \Rightarrow T^{(v)} = T^{(w)}. 
\eea
This convention is consistent with definition~\ref{def:ContrcUnitaireMonoTenseur}, where fixed-order tensor contractions were constructed using indistinguishable tensors.

We now address a notational consideration. Recall the notation for tensor components, $T^{(v)}_{i_{c_1} \cdots i_{c_p}}$. Since, in a tensor contraction, all indices are summed over, it is not necessary to keep track of each individual index $i^{(v)}_c \in [N]$ corresponding to the color $c \in [d]$ for a given tensor $T^{(v)}$. As these are dummy summation variables, only the color label $c$ will be relevant. Therefore, to simplify the notation, we adopt the following convention:
\bea 
T^{(v)}_{i_{c_1} \cdots i_{c_p}} \equiv T^{(v)}_{i, \phi(v)} 
\eea

\begin{definition}[Multiple-order tensor contraction]\label{def:CtrOrdreMult} Let $G= (\Lambda, \Gamma, \s)$ be a colored bipartite  graph of chromatic index $d \in \N^*$,  with $\Lambda = (V, \phi)$, $\Gamma = (W, \psi)$, and $\s = (\s_1, \dots, \s_d) \in S = \bigtimes_{c= 1}^d S(c)$. 
Let $(T^{(v)})_{v \in V}$ be a colored family of tensors associated with $\Lambda$, 
and $(R^{(w)})_{w \in W}$ be a colored family of tensors associated with $\Gamma$.  
A multiple-order contraction of  
$(T^{(v)})_{v \in V}$ and $(R^{(w)})_{w \in W}$
parametrized by the graph $G$ is defined by  
\bea 
I(G; (T^{(v)})_{v \in V} , (R^{(w)})_{w \in W}) = K(G) \left( \prod_{v \in V} T^{(v)}_{i, \phi(v)}\right) \left( \prod_{w \in W} \ov{R}^{(w)}_{j, \psi(w)}\right)\,,
\eea
with the contraction kernel given by:
\bea K(G) = \prod_{c = 1}^d \prod_{v \in V|c} \delta(i_c^{(v)}, j_c^{(\s_c(v))}) \,, 
\eea 
and where
    $i^{(v)}_c$ and  $j_c^{(w)}$ refer to summation indices of color $c$ of  $T^{(v)}$ and   
$R^{(w)}$, respectively.   
\end{definition}
We observe that the contraction kernel $K(G)$ is implicitly indexed by the set $E$ of edges defined in proposition~\ref{prop:linkWithCOnventionalGraph}. For each edge $(v, \sigma_c(v)) \in E$, the corresponding pair of upper indices in the pair $(i_c^{(v)}, j_c^{(\sigma_c(v))})$ forms the labeling associated with that edge.

\begin{proposition} \label{prop:particularCaseContraction} 
Let $G = (\Lambda, \Gamma, \sigma)$ be a colored bipartite graph of chromatic index $d \in \mathbb{N}^*$, where $\Lambda = (V, \phi)$, $\Gamma = (W, \psi)$, and $\sigma = (\sigma_1, \dots, \sigma_d) \in S := \bigtimes_{c=1}^d S(c)$. Let $(T^{(v)})_{v \in V}$ be a colored family of tensors associated with $\Lambda = (V, \phi)$, and  $(R^{(w)})_{w \in W}$ be a colored family of tensors associated with $\Gamma = (W, \psi)$. We assume the following conditions hold:
\bea
\label{eq:ConstraintContraction}
&&(i)  \quad \forall (v, w) \in V \times W, \quad \phi(v) = [d] = \psi(w), \crcr
&&(ii)  \quad \exists (v, w) \in V \times W \text{ such that } T^{(v)} = R^{(w)}.
\eea
Under these assumptions, both colored families of tensors $(T^{(v)})_{v \in V}$ and $(R^{(w)})_{w \in W}$ reduce to a single set of $n = |V| = |W|$ identical tensors $T$ of order $d$. 
In this setting, we have:
\bea
I\left(G;\, (T^{(v)})_{v \in V}, (R^{(w)})_{w \in W} \right) = I(\sigma; T),
\eea
where $I(\sigma; T)$ denotes the contraction defined in definition~\ref{def:ContrcUnitaireMonoTenseur}.

\end{proposition}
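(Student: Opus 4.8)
The plan is to collapse the graph-indexed contraction of Definition~\ref{def:CtrOrdreMult} onto the permutation-indexed contraction of Definition~\ref{def:ContrcUnitaireMonoTenseur} by unfolding the two hypotheses in \eqref{eq:ConstraintContraction} one after the other. First I would use condition $(i)$: since every vertex of $V$ and of $W$ carries the full color type $[d]$, we have $V|c = V$ and $W|c = W$ for each color $c \in [d]$, and each tensor $T^{(v)}$, $R^{(w)}$ is genuinely of order $d$. Compatibility of $\Lambda$ and $\Gamma$ imposes $m_c = m'_c$ for all $c$; with $m_c = |V|c| = |V|$ and $m'_c = |W|c| = |W|$ this forces $|V| = |W|$, and I would set $n := |V| = |W|$, which is the integer announced in the statement.

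Next I would reduce both colored families to a single tensor. Because all vertices of $V$ share the type $[d]$, Proposition~\ref{def:typeEquivRelation} places them in one class of $R_T$, and likewise for $W$; the indistinguishability convention \eqref{eq:indistinguishableTensor} then identifies all the $T^{(v)}$ with one tensor and all the $R^{(w)}$ with another. Condition $(ii)$ supplies a pair with $T^{(v)} = R^{(w)}$, so these two tensors coincide; calling the common tensor $T$, the family $(T^{(v)})_{v\in V}$ becomes $n$ copies of $T$ and $(R^{(w)})_{w\in W}$ becomes $n$ copies of $T$, whose conjugates are $n$ copies of $\ov{T}$.

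It then remains to match the two expressions symbol by symbol. I would fix bijections $V \to [n]$ and $W \to [n]$, i.e. a numbering of the vertices as discussed in Subsection~\ref{subsec:ExamplesAndGraphLink}. Since each $\sigma_c$ is a bijection $V|c = V \to W|c = W$, composing with these labelings turns it into an element of $S_n$, so that $\sigma = (\sigma_1,\dots,\sigma_d)$ is exactly the datum in $S_n^{\times d}$ feeding $I(\sigma;T)$. Under this identification the product $\prod_{v \in V} T^{(v)}_{i,\phi(v)}$ becomes $\prod_{l=1}^n T_{i_1^{(l)}\cdots i_d^{(l)}}$, the product $\prod_{w \in W} \ov{R}^{(w)}_{j,\psi(w)}$ becomes $\prod_{k=1}^n \ov{T}_{\bar i_1^{(k)}\cdots \bar i_d^{(k)}}$ after renaming $j_c^{(w)}$ as $\bar i_c^{(k)}$, and the kernel $K(G) = \prod_{c=1}^d \prod_{v \in V|c} \delta(i_c^{(v)}, j_c^{(\sigma_c(v))})$ becomes $\prod_{c=1}^d \prod_{l=1}^n \delta(i_c^{(l)}, \bar i_c^{(\sigma_c(l))})$, which is precisely the contraction kernel of Definition~\ref{def:ContrcUnitaireMonoTenseur}. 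Term-by-term equality of the kernels and of the tensor products then gives $I(G;\cdot) = I(\sigma;T)$.

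The step demanding the most care is the identification of each $\sigma_c$ as an element of $S_n$: the bijections $V \to [n]$ and $W \to [n]$ are not canonical, and a different numbering replaces $\sigma$ by $\rho\,\sigma\,\gamma$. The point to invoke here is that, by part $(ii)$ of Proposition~\ref{prop:InvarianceContraction}, the resulting value $I(\sigma;T)$ is unchanged under exactly this relabeling, so the asserted equality is well posed independently of the chosen numbering. Everything else is a routine unpacking of the definitions.
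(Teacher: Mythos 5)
Your proposal is correct and follows essentially the same route as the paper's proof: use condition $(i)$ to get $V|c=V$, $W|c=W$ and $|V|=|W|=n$, collapse the families to a single tensor $T$, label the vertices to identify each $\sigma_c$ with an element of $S_n$, and match the kernel and tensor products term by term. Your closing remark on well-posedness under relabeling (invoking Proposition~\ref{prop:InvarianceContraction}$(ii)$) and your explicit use of condition $(ii)$ to merge the two tensors are welcome refinements that the paper's proof leaves implicit, but they do not change the argument.
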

\begin{proof}
The constraint $(i)$ of \eqref{eq:ConstraintContraction} indicates that all vertices of $V$ ($W$, respectively) are of the same type and of valence $d$, carrying all the $d$ colors on their half-edges. 
Then, by definition: $\forall c \in [d],  V|c = V$
($W|c = W$, respectively). 

The compatibility condition between $\Lambda$ and $\Gamma$ implies that their color multiplicities are equal, and therefore that $ |V|=|W|$. 
We introduce the integer $n = |V| = |W|$ and 
a label for each vertex of $V = \{v_1, \dots , v_n\}$ and $W=\{w_1, \dots, w_n\}$. 
As a result, for all $c$ in $[d]$, $S(c)$ is the set of bijections between $V|c=V$ and $W|c=W$, two sets of cardinality $n$, giving $S(c) \cong S_n$ and $\bigtimes_{c = 1}^d S(c) \cong S_n^{\times d}$.
 Consider two vertices $v_i \in V$ and $w_j \in W$ such that $\sigma_c(v_i) = w_j$ for some $\sigma_c \in S(c)$. We may then identify a corresponding permutation $\tilde{\sigma}_c \in S_n$ satisfying $\tilde{\sigma}_c(i) = j$, \textit{via} the isomorphism established above. For simplicity, we will use the same notation and write $\sigma_c = \tilde{\sigma}_c$.

Since all tensors $T^{(v)}$, for $v \in V$, are equivalent, they are identified with a single tensor, denoted by $T$. Similarly, all tensors $R^{(w)}$, for $w \in W$, are equal to a common tensor $R$. Consequently, the contraction 
$I(G;(T^{(v)})_{v \in V} , (R^{(w)})_{w \in W})$  can now be expressed as:
\bea 
I(G; (T^{(v)})_{v \in V} , (T^{(w)})_{w \in W}) 
= K(G)\prod_{p = 1}^n T_{i, \phi(v_p)} \prod_{q=1}^n \ov{T}_{j, \psi(w_q)}, 
\eea
Inspecting the contraction kernel, we have:
\bea 
K(G) &=& \prod_{c = 1}^d \prod_{v \in V|c} \delta(i_c^{(v)}, j_c^{(\s_c(v))}) 
= \prod_{c = 1}^d \prod_{v \in V} \delta(i_c^{(v)}, j_c^{(\s_c(v))})\crcr    
&=&  \prod_{c = 1}^d \prod_{p = 1}^n \delta(i_c^{(v_p)}, j_c^{(\s_c(v_p))})
     = 
 \prod_{c = 1}^d \prod_{p = 1}^n \delta(i_c^{(p)}, j_c^{(\s_c(p))})
     \crcr
&=& K(\s; \{i_1^{(j)}, i_2^{(j)},\dots , i_k^{(j)}\}_j, \{i_1^{(k)}, i_2^{(k)},\dots , i_k^{(k)}\}_k)
\eea
where the last expression is the tensor contraction kernel of definition \ref{def:ContrcUnitaireMonoTenseur}. 
\end{proof}  

In the following, we introduce the tuple 
and use at times
\bea \label{eq:TupleContraction}
I= (V, \phi, W, \psi, \s, (T^{(v)})_{v \in V}, (R^{(w)})_{w \in W})
\eea
to designate the  multiple-order tensor contraction 
$I(G;  (T^{(v)})_{v \in V} , (R^{(w)})_{w \in W})$ of definition \ref{def:CtrOrdreMult}. 

\subsection{Unitary and group action invariances}\label{subsec:UniAndGroupAction}

This subsection aims at proving that multiple-order contractions of complex tensors, as stated in definition \ref{def:CtrOrdreMult}, have two invariances. First, they are invariant under the unitary group, this is discussed in the following section. Secondly, they are invariant under a group action, which generalizes the group action in  definition \ref{def:GroupActionMonoOrder}.

\subsubsection{Invariance under the unitary group} \label{subsubsec:UnitaryGroup}

\begin{theorem}[Unitary invariance] \label{theo:InvarianceUnderU}Let $I= (V, \phi, W, \psi, \s, (T^{(v)})_{v \in V}, (R^{(w)})_{w \in W})$ be a multiple-order contraction.  Let $N$ be the dimension of the $\C$-vector space $E$ where are defined the following families of tensors. Then, $I(G; (T^{(v)})_{v \in V} , (R^{(w)})_{w \in W})$  is invariant under the transformation of each tensor $T^{(v)}$
under the fundamental action of the group $U(N)^{\otimes |\phi(v)|}$, 
and each tensor $R^{(w)}$
under the fundamental action of the group $U(N)^{\otimes |\psi(w)|}$.
\end{theorem}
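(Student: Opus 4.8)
The plan is to reduce the statement to a single application of the unitarity relation $U^{\dagger}U=\mathrm{Id}$, performed once per edge of the graph $G$. First I would spell out precisely the group action being claimed: the relevant symmetry is the color-wise action in which, for every color $c\in[d]$, one fixed matrix $U^{(c)}\in U(N)$ acts through the fundamental representation on \emph{all} indices of color $c$, irrespective of which tensor carries them. A tensor $T^{(v)}$ then only feels the factors $U^{(c)}$ with $c\in\phi(v)$, i.e. the group $U(N)^{\otimes|\phi(v)|}$ of the statement, and likewise $R^{(w)}$ feels $U(N)^{\otimes|\psi(w)|}$. Explicitly,
\[
T^{(v)}_{i,\phi(v)}\;\longmapsto\;\Big(\prod_{c\in\phi(v)}(U^{(c)})_{i_c^{(v)}\,k_c^{(v)}}\Big)\,T^{(v)}_{k,\phi(v)},
\]
and, since the families $R^{(w)}$ enter the contraction through their conjugates, $\bar R^{(w)}$ acquires the antiholomorphic factors $\overline{(U^{(c)})_{j_c^{(w)}\,l_c^{(w)}}}$ for $c\in\psi(w)$.

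Next I would substitute these transformed tensors into the contraction of Definition~\ref{def:CtrOrdreMult} and regroup the unitary factors according to the edge set $E$. The decisive input is Proposition~\ref{prop:edges}: the kernel $K(G)$ is a product over $E=\bigcup_{c=1}^d\{(v,\sigma_c(v))\mid v\in V|c\}$, and for each color $c$ the map $\sigma_c$ is a bijection from $V|c$ onto $W|c$. Consequently every summed index of color $c$ is shared by exactly one holomorphic factor, coming from some $T^{(v)}$, and exactly one antiholomorphic factor, coming from $\bar R^{(\sigma_c(v))}$—both attached to color $c$ and hence governed by the same matrix $U^{(c)}$. For the edge $(v,\sigma_c(v))$ the Kronecker symbol $\delta(i_c^{(v)},j_c^{(\sigma_c(v))})$ identifies the two indices to a common value $m$, and summing over $m$ yields
\[
\sum_{m}(U^{(c)})_{m\,k}\,\overline{(U^{(c)})_{m\,l}}\;=\;\delta(k,l),
\]
by unitarity of $U^{(c)}$. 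Carrying this out independently for every edge collapses all the unitary matrices and restores the original kernel $K(G)$ acting on the untransformed tensors, which is exactly $I(G;(T^{(v)})_{v\in V},(R^{(w)})_{w\in W})$, establishing the invariance.

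The main obstacle is not the algebra of the cancellation but the combinatorial bookkeeping needed to guarantee that the pairing is perfect: one must verify that no summed index is left unpaired and that none carries two holomorphic or two antiholomorphic factors. This is precisely what the bipartite, edge-colored structure delivers—edges run only between $V$ and $W$, the color partition separates the indices by color, and the bijectivity of each $\sigma_c$ yields a one-to-one matching of the color-$c$ indices of the $T$-family with those of the $R$-family. Once this is in place the computation mirrors, edge by edge, the fixed-order argument behind Proposition~\ref{prop:InvarianceContraction}(i) in Appendix~\ref{app:Inv}, to which it specializes when every vertex carries the full color set $[d]$, as in Proposition~\ref{prop:particularCaseContraction}.
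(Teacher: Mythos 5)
Your proposal is correct and follows essentially the same route as the paper's proof: transform the tensor components, use Proposition~\ref{prop:edges} to reorganize the product of unitary factors edge by edge along the bijections $\sigma_c$, and cancel each holomorphic/antiholomorphic pair via unitarity. Your version is slightly more careful in one respect—you allow an independent $U^{(c)}$ per color, which matches the $U(N)^{\otimes|\phi(v)|}$ action stated in the theorem, whereas the paper's write-up uses a single $U$ for all colors; the argument is otherwise identical.
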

\begin{proof}
Let $U=[U_{ij}]_{1 \le i,j\le N}\in U(N)$ be a unitary matrix.  
Let $v \in V$, the type of $T^{(v)}$ is given by $\phi(v) = \left\{a_1, a_2, \dots, a_k \right\}$ where $a_i \in [d]$. The coefficients of $T^{(v)}$
in a given base 
are denoted by $T^{(v)}_{i_{a_1}, \cdots, i_{a_k}}$. 
Let 
$T'^{(v)}_{i_{a_1}, \cdots, i_{a_k}}$ denote
the coefficients after unitary transformation. The link between both is given by definition \ref{def:tenseur} (see in particular equation \eqref{eq:chtTenseurBase}): 
\bea 
T'^{(v)}_{i'_{a_1}, \cdots, i'_{a_k} } = U_{i'_{a_1}i_{a_1}}U_{i'_{a_2}i_{a_2}}\cdots \, U_{i'_{a_k}i_{a_k}} T^{(v)}_{i_{a_1}, \cdots, i_{a_k}}
\eea
Under the same convention, for every $w$ in $W$, we have the unitary transformation rule for the coefficients of $R^{(w)}$ given by: 
\bea 
R'^{(w)}_{j'_{b_1}, \cdots, j'_{b_m} } = U_{j'_{b_1}j_{b_1}}U_{j'_{b_2}j_{b_2}}\cdots \, U_{j'_{b_m}j_{b_k}} R^{(w)}_{j_{b_1}, \cdots, j_{b_m}},
\eea
where $\psi(w) = \{b_1,b_2, \dots, b_m\}$.
We denote by $I'\equiv I'(G; (T^{(v)})_{v \in V} , (R^{(w)})_{w \in W})$ the contraction formed from the base $\mathcal{B'}$ and by $I \equiv I(G; (T^{(v)})_{v \in V} , (R^{(w)})_{w \in W})$ the contraction formed from the base $\mathcal{B}$. There contraction kernels are respectively $K'(G)$ and $K(G)$. Then,  
\bea
\begin{aligned}
    I' &= K'(G) \left( \prod_{v \in V} T'^{(v)}_{i', \phi(v)}\right) \left( \prod_{w \in W} \ov{R'}^{(w)}_{j', \psi(w)}\right)\\
    &= K'(G) \left(\prod_{v \in V} \prod_{c \in \phi(v)} U_{i'^{(v)}_ci_c^{(v)}}\right)\left( \prod_{v \in V} T^{(v)}_{i, \phi(v)}\right) \left(\prod_{w \in W} \prod_{c \in \psi(w)} \ov{U}_{j'^{(w)}_c j_c^{(w)}}\right)\left( \prod_{w \in W} \ov{R}^{(w)}_{j, \psi(w)}\right)
\end{aligned}
\eea
By virtue of proposition \ref{prop:edges},  the product over the matrix coefficients can be written as:
\bea \label{eq:proofUnitarity1}\prod_{v \in V} \prod_{c \in \phi(v)} U_{i'^{(v)}_ci_c^{(v)}} = \prod_{c = 1}^d \prod_{v \in V|c} U_{i'^{(v)}_ci_c^{(v)}}\eea
\bea \label{eq:proofUnitarity2}\prod_{w \in W} \prod_{c \in \psi(w)} \ov{U}_{j'^{(v)}_c j_c^{(v)}}  = \prod_{c = 1}^d \prod_{w \in W|c} \ov{U}_{j'^{(w)}_c j_c^{(w)}} = \prod_{c = 1}^d \prod_{v \in  V|c} \ov{U}_{j'^{(\s_c(v))}_c j_c^{(\s_c(v))}}.\eea

Indeed, in equation \eqref{eq:proofUnitarity1}, we used the fact that each term of the product was in bijection with an element of the edge set \eqref{eq:edgesSet} to re-order the product. The same reasoning holds for equation \eqref{eq:proofUnitarity2} with at the end the change of variable $v = \s_c^{-1}(w)$ since $V|c$ is in bijection with $W|c$ through $\s_c$. As a consequence, equations \eqref{eq:proofUnitarity1} and \eqref{eq:proofUnitarity2} can be merged with the contraction kernel $K'(G)$ so as to have:
\bea I'(G) = L(G)\left( \prod_{v \in V} T^{(v)}_{i, \phi(v)}\right)\left( \prod_{w \in W} \ov{R}^{(w)}_{j, \psi(w)}\right)\eea
with 
\bea  L(G) = \prod_{c = 1}^d \prod_{v \in V|c}
 U_{i'^{(v)}_ci_c^{(v)}} \delta(i_c^{'(v)}, j_c^{'(\s_c(v))})\ov{U}_{j'^{(\s_c(v))}_c j_c^{(\s_c(v))}}\eea
 Noticing that $i^{'(p)}_c$ and $j^{'(\s_c(p))}_c$ are free indices of summation, one can integrate the Kroneckers:
 \bea
L(G) = \prod_{c = 1}^d \prod_{v \in V|c}
 U_{i'^{(v)}_ci_c^{(v)}}\ov{U}_{i'^{(v)}_c j_c^{(\s_c(v))}}
 = \prod_{c = 1}^d \prod_{v \in V} \delta(i_c^{(v)}, j_c^{(\s_c(v))}).
 \eea
 The last equality uses the unitarity condition on the matrices: $\ov{U}_{ba}U_{bc} = \delta(a, c)$. Ultimately, $L(G) = K(G)$ and $I' = I$. 
\end{proof}

\subsubsection{Invariance under group action}
\label{subsubsec:InvarianceUnderPermutation}

We reveal a second invariance of multiple-order tensor contractions determined by the indistinguishability of the tensors at fixed order and given color type. 

Let $V$ be a finite set 
and $U \subseteq V$ be a subset of $V$. 
Let $\s\in S(V)$,  where 
$S(V)$ is the set of bijections over $V$. 
$S(V)$ is isomorphic to the  
group $S_{|V|}$ of permutations of 
$|V|$ objects. 
If $\s (U) \subset U$, then the restriction 
of $\s$ to $U$, that we will denote by 
$\s[U]$, is a permutation of $U$. 

\begin{proposition} \label{prop:H}
Let $\Lambda = (V, \phi)$ be a colored set of vertices. Then,
\bea H(\Lambda) = \left\{ \s \in S(V) \, |\, \forall A \in P([d]), \, \s(\phi^{-1}(A)) \subseteq \phi^{-1}(A)  \right\}\eea
is a subgroup of $S(V)$. Furthermore, 
\bea H(\Lambda) = \{\s \in S(V)\, |\, \forall v \in V, \, v \, R_T \, \s(v)\}.\eea
\end{proposition}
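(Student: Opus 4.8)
The plan is to establish the two claims in turn: first that $H(\Lambda)$ is a subgroup of $S(V)$, and then that the two set-descriptions coincide. Throughout I would exploit that, by Proposition~\ref{def:typeEquivRelation}, the family $(\phi^{-1}(A))_{A \in P([d])}$ is exactly the partition of $V$ into $R_T$-equivalence classes, together with the standing assumption that $V$ is finite.

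For the subgroup claim I would check the three standard axioms. The identity clearly satisfies $\id(\phi^{-1}(A)) = \phi^{-1}(A) \subseteq \phi^{-1}(A)$, so $\id \in H(\Lambda)$ and in particular $H(\Lambda) \neq \varnothing$. For closure under composition, if $\s, \tau \in H(\Lambda)$ then for every $A \in P([d])$ I would chain the inclusions $(\s\tau)(\phi^{-1}(A)) = \s(\tau(\phi^{-1}(A))) \subseteq \s(\phi^{-1}(A)) \subseteq \phi^{-1}(A)$, giving $\s\tau \in H(\Lambda)$.

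The step requiring care — and the main obstacle — is closure under inversion, because the defining condition is phrased with an inclusion $\s(\phi^{-1}(A)) \subseteq \phi^{-1}(A)$ rather than an equality, and an inclusion does not in general pass to inverses. Here I would invoke finiteness: since $\s$ is a bijection, $|\s(\phi^{-1}(A))| = |\phi^{-1}(A)|$, so the inclusion between finite sets of equal cardinality is forced to be an equality $\s(\phi^{-1}(A)) = \phi^{-1}(A)$. Applying $\s^{-1}$ then yields $\s^{-1}(\phi^{-1}(A)) = \phi^{-1}(A)$, which in particular gives $\s^{-1}(\phi^{-1}(A)) \subseteq \phi^{-1}(A)$ for every $A$; hence $\s^{-1} \in H(\Lambda)$.

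Finally, for the equivalence of the two descriptions I would simply unpack the relation $R_T$ via $v \, R_T \, w \Leftrightarrow \phi(v) = \phi(w)$. For the forward inclusion, given $\s \in H(\Lambda)$ and $v \in V$, I set $A = \phi(v)$ so that $v \in \phi^{-1}(A)$; then $\s(v) \in \s(\phi^{-1}(A)) \subseteq \phi^{-1}(A)$ forces $\phi(\s(v)) = A = \phi(v)$, i.e.\ $v \, R_T \, \s(v)$. Conversely, assuming $\phi(\s(v)) = \phi(v)$ for all $v$, I take any $A$ and any $v \in \phi^{-1}(A)$; then $\phi(\s(v)) = \phi(v) = A$ shows $\s(v) \in \phi^{-1}(A)$, whence $\s(\phi^{-1}(A)) \subseteq \phi^{-1}(A)$. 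This double implication gives the stated equality of the two sets.
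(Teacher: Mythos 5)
Your proof is correct and follows essentially the same route as the paper's: one checks the subgroup axioms, using finiteness of $V$ to upgrade the defining inclusion $\s(\phi^{-1}(A)) \subseteq \phi^{-1}(A)$ to an equality so that inverses remain in $H(\Lambda)$, and then proves the second identity by a double inclusion that unpacks $R_T$. If anything, you are more explicit than the paper about why the bijectivity-plus-finiteness argument forces the equality, a step the paper's proof leaves largely implicit.
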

\begin{proof}
First and foremost, $H(\Lambda) \subseteq S(V)$ and contains the identity element. Then $H(\Lambda)$ is a subgroup of $S(V)$ if and only if
for two $\s, \r \in H(\Lambda)$, $\s\r^{-1} \in H(\Lambda)$. 
One should notice that:
\bea \label{eq:proofSubgroup}\forall \s \in H(\Lambda), \forall A \in P([d]), \, \s(\phi^{-1}(A)) = \phi^{-1}(A) \Leftrightarrow \s^{-1}(\phi^{-1}(A)) = \phi^{-1}(A)
\eea
Let $s, \r$ be two elements of $H(\Lambda)$. It is clear with equation \eqref{eq:proofSubgroup} that $\r^{-1} \in H(\Lambda)$ and
\bea 
\forall A \in P([d]),\,  \r^{-1}(\phi^{-1}(A)) \subseteq \phi^{-1}(A)\Rightarrow \s\r^{-1}(\phi^{-1}(A)) \subseteq \phi^{-1}(A). 
\eea
As a result, $H(\Lambda)$ is a subgroup of $S(V)$.

Now, we define $X = \{\s \in S(V)\, | \, v \, R_T \, \s(v)\}$. Let $\s \in X$, $A \in P([d])$ and $v \in V$ such that $\phi(v) = A$. Since $v \, R_T \, \s(v)$, we have $\phi(\s(v)) = \phi(v) = A$ and $\s(v) \in \phi^{-1}(A)$. As a result $X \subseteq H(\Lambda)$.  Conversely, let $\s \in H(\Lambda)$ and $v \in V$. There exists $A \in P([d])$ such that $\phi(v) = A$ and, by definition of $H(\Lambda)$, $\s(v) \in \phi^{-1}(A)$ \textit{i.e.} $\phi(\s(v)) = A = \phi(v)$. Therefore, $v \, R_T\, \s(v)$ and $H(\Lambda) \subseteq X$. Therefore, $H(\Lambda) = X$. 
\end{proof}

\begin{proposition}[Decomposition of the sugroup action]\label{prop:decompositionPermutation}
Let $\Lambda = (V, \phi)$ be a colored set of vertices. Then, 
 $H(\Lambda)$ is isomorphic to $\bigtimes_{A \in P([d])} S(\phi^{-1}(A))$. Furthermore, every permutation $\pi$ in $H(\Lambda)$ admits a decomposition into a product of disjoint permutations:
\bea
\label{eq:piProdpi}
\pi = \prod_{A \in P([d])} \tilde{\pi}_A,
\eea 
where for all $A \in P([d])$, $\pi_A = \pi[\phi^{-1}(A)] \in S(\phi^{-1}(A))$ and $\tilde{\pi}_A$ is an extension of $\pi_A$ over $V$ such that $\tilde{\pi}_A[V \backslash \phi^{-1}(A)]$ is equal to the identity.
\end{proposition}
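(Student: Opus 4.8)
The plan is to exhibit an explicit isomorphism between $H(\Lambda)$ and $\bigtimes_{A \in P([d])} S(\phi^{-1}(A))$, and then to read off the stated decomposition as an immediate consequence. The whole argument rests on one structural observation that I would record first: the fibers $\phi^{-1}(A)$, for $A \in P([d])$, partition $V$. Indeed, since $\phi$ is a function, each $v \in V$ lies in exactly one fiber, namely $\phi^{-1}(\phi(v))$, and by Proposition~\ref{def:typeEquivRelation} these fibers are precisely the $R_T$-equivalence classes; hence they are pairwise disjoint with union $V$. The fiber $\phi^{-1}(\varnothing)$ is empty by the defining property of the type function in Definition~\ref{def:coloredVertex}, and any unrealized type $A$ yields an empty fiber with trivial factor $S(\varnothing) = \{\id\}$, so such blocks contribute nothing and may be ignored.

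Next I would define the candidate map $\Psi \colon H(\Lambda) \to \bigtimes_{A} S(\phi^{-1}(A))$ by $\Psi(\s) = (\s[\phi^{-1}(A)])_{A \in P([d])}$. The first point to verify is that each restriction is well defined as an element of $S(\phi^{-1}(A))$: by the characterization of $H(\Lambda)$ one has $\s(\phi^{-1}(A)) \subseteq \phi^{-1}(A)$, and because $V$ is finite and $\s$ is injective, a map sending the finite set $\phi^{-1}(A)$ into itself is in fact a bijection of that set, so $\s[\phi^{-1}(A)] \in S(\phi^{-1}(A))$. I would then check that $\Psi$ is a homomorphism: since each fiber is invariant under both $\s$ and $\r$ for $\s,\r \in H(\Lambda)$, the restriction of the composite to a block equals the composite of the restrictions, so $\Psi(\s\r) = \Psi(\s)\Psi(\r)$ componentwise.

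Injectivity is immediate, since a permutation of $V$ that restricts to the identity on every block of a partition of $V$ is the identity. For surjectivity I would construct, from any tuple $(\pi_A)_A$ with $\pi_A \in S(\phi^{-1}(A))$, the map $\s$ defined blockwise by $\s(v) = \pi_{\phi(v)}(v)$; this is well defined and bijective precisely because the blocks partition $V$, it lies in $H(\Lambda)$ because $\s(\phi^{-1}(A)) = \phi^{-1}(A)$ for each $A$, and $\Psi(\s) = (\pi_A)_A$ by construction. This establishes the isomorphism.

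Finally, the decomposition follows by transporting the product structure of the codomain back through $\Psi$. Given $\pi \in H(\Lambda)$, I would set $\pi_A = \pi[\phi^{-1}(A)]$ and let $\tilde{\pi}_A$ extend $\pi_A$ by the identity off $\phi^{-1}(A)$. The supports of the $\tilde{\pi}_A$ lie in the pairwise disjoint blocks $\phi^{-1}(A)$, so these permutations commute and the finite product $\prod_{A} \tilde{\pi}_A$ is unambiguous; evaluating it at $v$ with $\phi(v) = A_0$ yields $\pi_{A_0}(v) = \pi(v)$, since every factor with $A \neq A_0$ fixes $v$, whence $\prod_A \tilde{\pi}_A = \pi$. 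I expect the only genuinely delicate step to be the ``maps into implies maps onto'' upgrade for the restrictions, which is exactly where finiteness of $V$ is used; the remainder is bookkeeping over the partition into fibers.
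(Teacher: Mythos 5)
Your proof is correct and follows essentially the same route as the paper: the map $\pi \mapsto (\pi[\phi^{-1}(A)])_{A}$ onto the product of symmetric groups of the fibers, together with the observation that the fibers partition $V$ so the identity-extended restrictions commute and multiply back to $\pi$. You simply spell out the details (well-definedness of the restrictions via finiteness, the homomorphism property, injectivity, and surjectivity) that the paper dismisses as ``direct to show.''
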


\begin{proof} Let $A \in P([d)])$ and $\pi \in H(\Lambda)$. Since $\pi(\phi^{-1}(A)) \subseteq \phi^{-1}(A)$, we define the restriction $\pi_A \equiv \pi[\phi^{-1}(A)]: \phi^{-1}(A) \longrightarrow \phi^{-1}(A)$
and its extension  $\pi_A$ over $V$ by demanding that $\tilde{\pi}_A = id$ on $V \backslash \phi^{-1}(A)$. Since $(\phi^{-1}(A))_{A \in P([d])}$ is a partition of $V$, and $\forall A\neq B \in P([d]),\, \tilde{\pi}_A\tilde{\pi}_B = \tilde{\pi}_B\tilde{\pi}_A$. The decomposition into disjoint permutations
\eqref{eq:piProdpi} follows from the fact
that the union over $A\in P([d])$ of $\phi^{-1}(A)$ covers $V$. 

One defines the mapping $\r$ by :   
\bea \r : \left\{ \begin{array}{l l}
 H(\Lambda) \rightarrow \bigtimes_{A \in P([d])} S(\phi^{-1}(A)) \\
\pi \mapsto (\pi_A)_{A \in P([d])}
\end{array}\right. \eea
It is then direct to show that $\r$ is an isomorphism. 

\end{proof}

Let $\Lambda = (V, \phi)$ be a colored set of vertices of chromatic index $d \in \N^*$. Then, 
\bea 
\forall c \in \inter{1}{d}, \, \forall \s \in  H(\Lambda),\, \s\left(V|c\right) \subseteq V|c \; . 
\eea
The proof of this statement is simple. Since 
\bea 
\forall c \in P([d]), \, V|c = \{ v \in V\, | \, c \in \phi(v) \} = \bigcup_{A \in P([d])\, | \, c \in A}\phi^{-1}(A)\,, 
\eea
the property of elements of $H(\Lambda)$ allows us to write 
\bea 
\s\left( V|c \right)  = \bigcup_{A \in P([d])\, | \, c \in A} \s \left( \phi^{-1}(A) \right) \subseteq \bigcup_{A \in P([d])\, | \, c \in A} \phi^{-1}(A) = V|c \, . 
\eea

\begin{definition}[Group action]\label{def:groupActionGeneralized} Let $\Lambda= (V, \phi)$, $\Gamma=(W, \psi)$ be two compatible colored sets of vertices of chromatic index $d \in \N^*$. We define a group action of $H(\Lambda)\times H(\Gamma)$ on the set $S=\bigtimes_{c = 1}^d S(c)$ as: 
\bea 
\alpha(\Lambda, \Gamma): \left\{ \begin{array}{l l}
H(\Lambda)\times H(\Gamma) \times S \rightarrow S\\
(\pi, \eta, \s) \mapsto \eta.\s.\pi \equiv \left( \eta_1 \, \s_1 \, \pi_1,\eta_2 \, \s_2 \, \pi_2, \dots,   \eta_d \, \s_d \, \pi_d\right)
\end{array}\right.
\eea
with the notation $\pi_c \equiv \pi[V|c]$ and  $\eta_c \equiv \eta[W|c]$, for any color $c \in \inter{1}{d}$.

\end{definition}

\begin{proposition} \label{prop:paritcularGroupAction} Let $G= (\Lambda, \Gamma, \s)$ be a colored bipartite graph of chromatic index $d \in \N^*$ with $\Lambda = (V, \phi)$, $\Gamma = (W, \psi)$ and $\s = (\s_1, \dots, \s_d) \in S = \bigtimes_{c= 1}^d S(c)$. We assume the condition $(i)$ of proposition \ref{prop:particularCaseContraction} holds. 
Then, 
\bea \alpha(\Lambda, \Gamma) = \alpha^d_n  \,, 
\eea
where $n \equiv |V| = |W|$, and $\alpha^d_n$ is the group action of definition \ref{def:GroupActionMonoOrder}.
\end{proposition}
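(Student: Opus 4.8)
The plan is to unfold both group actions under the hypothesis and verify that they literally coincide after the natural identifications. First I would exploit condition $(i)$ of Proposition~\ref{prop:particularCaseContraction}: since $\phi(v) = [d]$ for every $v \in V$ and $\psi(w) = [d]$ for every $w \in W$, each vertex carries all $d$ colors, so $V|c = V$ and $W|c = W$ for every $c \in [d]$. In particular the type partition collapses, with $\phi^{-1}([d]) = V$ and $\phi^{-1}(A) = \varnothing$ for $A \neq [d]$, and likewise for $\psi$.

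Next I would identify the groups in play. By Proposition~\ref{prop:H}, $H(\Lambda) = \{\sigma \in S(V) \mid \forall v \in V,\, v\, R_T\, \sigma(v)\}$; since all vertices of $V$ share the common type $[d]$, the condition $v\, R_T\, \sigma(v)$ holds automatically, whence $H(\Lambda) = S(V)$. Equivalently, Proposition~\ref{prop:decompositionPermutation} gives $H(\Lambda) \cong \bigtimes_{A \in P([d])} S(\phi^{-1}(A)) = S(V)$, the other factors being trivial. The same argument yields $H(\Gamma) = S(W)$. Fixing labelings $V = \{v_1, \dots, v_n\}$ and $W = \{w_1, \dots, w_n\}$ with $n = |V| = |W|$ (the equality coming from compatibility), we obtain $H(\Lambda) \cong S_n \cong H(\Gamma)$, matching $H_1 = H_2 = S_n$ of Definition~\ref{def:GroupActionMonoOrder}. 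Moreover, exactly as in the proof of Proposition~\ref{prop:particularCaseContraction}, each $S(c)$ is the set of bijections between $V|c = V$ and $W|c = W$, so $S(c) \cong S_n$ and $S = \bigtimes_{c=1}^d S(c) \cong S_n^{\times d}$.

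It then remains to compare the action formulas. In Definition~\ref{def:groupActionGeneralized} the restrictions appearing in the action are $\pi_c = \pi[V|c]$ and $\eta_c = \eta[W|c]$; since $V|c = V$ and $W|c = W$, these restrictions are trivial, i.e. $\pi_c = \pi$ and $\eta_c = \eta$ for every color $c$. Hence $\alpha(\Lambda, \Gamma)$ sends $(\pi, \eta, \sigma)$ to $(\eta\,\sigma_1\,\pi, \dots, \eta\,\sigma_d\,\pi)$. Transporting through the isomorphisms above—so that $\pi \in S(V)$, $\eta \in S(W)$ and each $\sigma_c \in S(c)$ become elements of $S_n$, and the composite $\eta \circ \sigma_c \circ \pi : V \to W$ becomes the product of the corresponding permutations—this is precisely the assignment $(\gamma, \rho, \sigma) \mapsto (\gamma\,\sigma_1\,\rho, \dots, \gamma\,\sigma_d\,\rho)$ defining $\alpha^d_n$, under the correspondence $\gamma = \eta$, $\rho = \pi$ (which identifies $H(\Gamma)$ with $H_1$ and $H(\Lambda)$ with $H_2$). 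Therefore $\alpha(\Lambda, \Gamma) = \alpha^d_n$.

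I expect the only delicate point to be bookkeeping: one must check that the various identifications ($V \cong [n] \cong W$ and $S(c) \cong S_n$) are chosen compatibly, so that the composition $\eta \circ \sigma_c \circ \pi$ of maps $V \to V \to W \to W$ genuinely corresponds to the product $\gamma\,\sigma_c\,\rho$ in $S_n$, and that the cosmetic swap of the two $S_n$ factors does not affect the resulting action on $S_n^{\times d}$. This is secured by fixing the labelings once and for all and reading every bijection as a permutation of $[n]$ through them; no genuine obstruction arises beyond this verification.
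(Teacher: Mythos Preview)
Your proof is correct and follows essentially the same route as the paper: both use condition $(i)$ to collapse $V|c = V$, $W|c = W$, identify $H(\Lambda) = S(V) \cong S_n$ and $H(\Gamma) = S(W) \cong S_n$, observe $S(c) \cong S_n$, and then check that the restrictions $\pi_c = \pi$, $\eta_c = \eta$ make the action formula match $\alpha^d_n$ up to the obvious isomorphisms. The only cosmetic difference is that you invoke the $R_T$-characterization of $H(\Lambda)$ from Proposition~\ref{prop:H}, while the paper works directly from the defining condition $\sigma(\phi^{-1}(A)) \subseteq \phi^{-1}(A)$; your final paragraph on bookkeeping is a welcome clarification that the paper leaves implicit.
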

\begin{proof}
With the condition $(i)$ of proposition \ref{prop:particularCaseContraction}, we have for all $c$ in $[d]$: $V|c = V$ and similarly, $W|c = W$. Thus, using the the compatibility between $\Lambda$ and $\Gamma$ and the equality of color multiplicity, we have $|W| = |V|$. The cardinalities $|V|=|W|$ are noted down $n$. On the other hand, since for all $c \in [d]$, $S(c)$ is the set of bijection between $W|c = W$ and $V|c = V$, it is clear that $S(c)$ is isomorphic to $S_n$ the symmetric group and $S = \bigtimes_{c = 1} S(c)\cong S_n^{\times d}$. Similarly, $S(V)$, the set of bijections between vertices of $V$ is also isomorphic to $S_n$, and likewise $S(W) \cong S_n$. Then, $H(\Lambda)$ becomes 
\bea
H(\Lambda)
= \left\{ \s \in S(V) \, |\, \s(\phi^{-1}([d])) \subseteq \phi^{-1}([d])  \right\} 
= S(V) \cong S_n \, . 
\eea
We find the same results concerning $\Gamma$, $H(\Gamma) \cong S_n$. Using definition \ref{def:groupActionGeneralized}, for all $c$ in $[d]$, for all $\pi$ in $S(V)$ and $\eta$ in $S(W)$:
\bea 
\pi_c \equiv \pi[V|c] = \pi \quad \text{ and } \quad   \eta_c \equiv \eta[W|c] = \eta \,. 
\eea
Finally, $\alpha(\Lambda, \Gamma)$ meets the definition \ref{def:GroupActionMonoOrder} of $\alpha^d_n$, up to group isomorphisms 
$H(\Lambda) \cong S_n \cong H(\Gamma)$, 
and $S \cong S_n^{\times d}$. 
\end{proof}

\begin{theorem}\label{theo:InvGroupCtrcMultiple}
Let $ I(\Lambda,\Gamma, \s; (T^{(v)})_{v \in V} , (R^{(w)})_{w \in W})$ be a multiple-order contraction. Then, $\forall \pi \in H(\Lambda), \forall \eta \in H(\Gamma)$, 
\bea 
 I(\Lambda,\Gamma, \eta.\s.\pi; (T^{(v)})_{v \in V} , (R^{(w)})_{w \in W}) =   I(\Lambda,\Gamma, \s; (T^{(v)})_{v \in V} , (R^{(w)})_{w \in W}) \,. 
\eea

\end{theorem}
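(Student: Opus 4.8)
The plan is to show that the two contractions coincide by turning the one built on the tuple $\eta.\sigma.\pi=(\eta_c\sigma_c\pi_c)_c$ into the one built on $\sigma$ through a bijective relabeling of the dummy summation indices, the relabeling being exactly compensated by the indistinguishability of equal-type tensors recorded in \eqref{eq:indistinguishableTensor}. Unpacking Definition~\ref{def:CtrOrdreMult}, each side is a sum over all free indices $(i_c^{(v)})$ and $(j_c^{(w)})$ of a kernel $\prod_c\prod_{v\in V|c}\delta(\cdots)$ times $\prod_{v}T^{(v)}_{i,\phi(v)}$ and $\prod_{w}\ov{R}^{(w)}_{j,\psi(w)}$; only the kernel differs between the two. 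Because $\pi$ and $\eta$ act on the disjoint sides $V$ and $W$, I can carry them along simultaneously, using one relabeling of the $i$-indices and one of the $j$-indices.

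First I would record the relabeling identity that powers the whole argument. Fix $\pi\in H(\Lambda)$ and recall from Proposition~\ref{prop:H} that $\phi(\pi(v))=\phi(v)$ for all $v$, together with $\pi(V|c)\subseteq V|c$, so that the restriction $\pi_c=\pi[V|c]$ is a genuine bijection of $V|c$. Reindexing the product over $V$ by the bijection $\pi$ and then using both $\phi(\pi(v))=\phi(v)$ and the identification $T^{(\pi(v))}=T^{(v)}$ from \eqref{eq:indistinguishableTensor}, one gets, for any family of indices $(a_c^{(v)})$,
\[
\prod_{v\in V} T^{(v)}_{(a_c^{(v)})_{c\in\phi(v)}}=\prod_{v\in V} T^{(v)}_{(a_c^{(\pi_c(v))})_{c\in\phi(v)}},
\]
since for $c\in\phi(v)$ one has $\pi(v)=\pi_c(v)$. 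The same computation with $\eta\in H(\Gamma)$ (and, symmetrically, with $\eta^{-1}\in H(\Gamma)$) gives the analogous identity for the conjugate family $\prod_w\ov{R}^{(w)}$.

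Next comes the change of variables. In the $\eta.\sigma.\pi$ contraction, whose kernel I denote $K'$, I would first reindex the kernel product over $V|c$ by $u=\pi_c(v)$, rewriting $K'=\prod_c\prod_{u\in V|c}\delta(i_c^{(\pi_c^{-1}(u))},\,j_c^{(\eta_c\sigma_c(u))})$, and then introduce new dummy indices $\tilde i_c^{(u)}=i_c^{(\pi_c^{-1}(u))}$ and $\tilde j_c^{(x)}=j_c^{(\eta_c(x))}$, which are legitimate substitutions precisely because $\pi_c$ and $\eta_c$ are bijections of $V|c$ and $W|c$. The kernel then becomes verbatim the standard one, $\prod_c\prod_{u\in V|c}\delta(\tilde i_c^{(u)},\tilde j_c^{(\sigma_c(u))})$. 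Substituting the same change into the two tensor products yields $\prod_v T^{(v)}_{(\tilde i_c^{(\pi_c(v))})}$ and $\prod_w\ov{R}^{(w)}_{(\tilde j_c^{(\eta_c^{-1}(w))})}$, and applying the relabeling identities of the previous step strips off the $\pi_c$ and the $\eta_c^{-1}$, returning $\prod_v T^{(v)}_{(\tilde i_c^{(v)})}$ and $\prod_w\ov{R}^{(w)}_{(\tilde j_c^{(w)})}$. The resulting expression is exactly $I(\Lambda,\Gamma,\sigma;(T^{(v)})_{v\in V},(R^{(w)})_{w\in W})$, which closes the proof.

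The main obstacle is not analytic but bookkeeping: one must keep straight which relabeling acts on which side and check that the restrictions $\pi_c,\eta_c$ land in the correct colored sections, so that the composite $\eta_c\sigma_c\pi_c$ factors cleanly under the two independent substitutions. The genuinely essential ingredient — the one whose removal would make the argument collapse — is the identification $T^{(\pi(v))}=T^{(v)}$ of equal-type tensors: it is exactly what lets the index relabeling be absorbed without a trace. As a consistency check, when every type equals $[d]$ this reduces, via Proposition~\ref{prop:paritcularGroupAction}, to the fixed-order invariance $I(\rho\sigma\gamma;T)=I(\sigma;T)$ of Proposition~\ref{prop:InvarianceContraction}(ii).
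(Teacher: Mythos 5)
Your proof is correct and follows essentially the same route as the paper's: both arguments rest on the fact that $\pi\in H(\Lambda)$ and $\eta\in H(\Gamma)$ preserve color types, so that the identification $T^{(\pi(v))}=T^{(v)}$, $R^{(\eta(w))}=R^{(w)}$ of \eqref{eq:indistinguishableTensor} lets the label permutation be absorbed, after which a change of variables in the kernel product over each $V|c$ converts $K(\Lambda,\Gamma,\eta.\s.\pi)$ into $K(\Lambda,\Gamma,\s)$. Your write-up is if anything slightly more explicit than the paper's, since you isolate the relabeling of the tensor product as a standalone identity rather than asserting directly that the kernel labels may be permuted.
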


\begin{proof}
Consider  
$I= (\Lambda=(V, \phi), \Gamma=( W, \psi), \s, (T^{(v)})_{v \in V}, (R^{(w)})_{w \in W})$ a multiple-order contraction. 
A key argument in the proof is given in  equations \eqref{eq:tensorEquivalence} and \eqref{eq:indistinguishableTensor}. For $v_1$ and $v_2$ in $V$, $T^{(v_1)}$ and $T^{(v_2)}$, two tensors  of order $n = \phi(v_1) = \phi(v_2)$, are stated to be equal if and only if $v_1$ and $v_2$ are equivalent. Under this condition, $T^{(v_1)}$ and $T^{(v_2)}$ can swap their labels and the set of colored indices $\{i^{(v_1)}_c\}_{c \in \phi(v_1)}$ of $T^{(v_1)}$ is swapped with $\{i^{(v_2)}_c\}_{c \in \phi(v_2)}$, the set of colored indices of $T^{(v_2)}$:
\bea
\{i^{(v_1)}_c\}_{c \in \phi(v_1)} \leftrightarrow \{i^{(v_2)}_c\}_{c \in \phi(v_2)} \Leftrightarrow v_1 \, R_T \, v_2.\eea
Then, it is possible to permute labels $(v)$ of the indices $i^{(v)}_c$ in the contraction kernel $K(G)$ with a permutation $\pi \in S(V)$ if and only if 
\bea 
\forall v \in V, \, 
\pi(v) \, R_T \, v \Leftrightarrow \pi \in H(\Lambda),
\eea
where proposition \ref{prop:H} yields the equivalence. Of course, this discussion is also true for $W$ and the indices $j^{(w)}_c$ of $(R^{(w)})_{w \in W}$, \textit{i.e.} labels $(w)$ of the indices $j^{(w)}_c$ are permutable by $\eta \in S(W)$ if and only if $\eta \in H(\Gamma)$. 

Focusing only on the contraction kernel $K(G) = K(\Lambda, \Gamma, \s)$, we consecutively apply  the permutation $\pi$ and then $\eta$ on the labels $i^{(v)}_c$ and $j^{(w)}_c$, respectively, for $v \in V$ and $w \in W$:
\bea
K(\Lambda, \Gamma, \s) = \prod_{c = 1}^d \prod_{v \in V|c}  \delta(i_c^{(\pi_c(v))}, j_c^{(\s_c(v))})
= \prod_{c = 1}^d \prod_{v \in V|c}  \delta(i_c^{(\pi_c(v))}, j_c^{(\eta_c\s_c(v))}) \,,
\eea
where, for every $c \in [d]$,  $\pi$ and $\eta$ have been restricted to $\pi_c \equiv \pi[V|c]$ and $\eta_c \equiv \eta[W|c]$, 
respectively. We can justify this by noting that $v \in V|c,\, \s_c(v) \in W|c$ and $\pi(V|c) \subseteq V|c, \, \eta(W|c) \subseteq W|c$.

A final step consists of the following change of variable: $v' = \pi(v) \Leftrightarrow v = \pi^{-1}(v')$, giving 
\bea 
K(\Lambda, \Gamma, \s) = \prod_{c = 1}^d \prod_{v' \in V|c}  \delta(i_c^{(v')}, j_c^{(\eta_c\s_c\pi_c^{-1}(v'))}) = K(\Lambda, \Gamma, \eta.\s.\pi^{-1}) 
\eea
which ends the proof. 

\end{proof}

Theorem \ref{theo:InvGroupCtrcMultiple} 
teaches us that $H(\Lambda)$ is the symmetry of group of the colored set of vertices $\Lambda$ and $H(\Gamma)$ the symmetry group of $\Gamma$.

\subsection{Counting unitary invariants}\label{subsec:ComptageInvMultiOrder}

We are in position to 
enumerate  multiple-order tensor contractions. We must take into account the orbits of the group action  previously defined. 

\begin{definition}[Cardinality function] Let $\Lambda = (V, \phi)$ be a colored set of vertices of chromatic index $d \in \N^*$. The cardinalilty function $n$ associated with $\Lambda$ is an application
$n: P([d]) \rightarrow  \N$
that maps a color type $A \in P([d])$ to the cardinality  $|\phi^{-1}(A)|$.
\end{definition}

\begin{theorem}\label{theo:CountMultiOrder}
Let $G= (\Lambda, \Gamma, \s)$ be a colored bipartite graph of chromatic index $d \in \N^*$. Let $n$ and $m$ be  the cardinality functions of $\Lambda$ and $\Gamma$, respectively. We denote by $Z(\Lambda, \Gamma)$ the number of orbits of the group action $\alpha(\Lambda, \Gamma)$. Then,
\bea \label{eq:orbitalCounting}
Z(\Lambda, \Gamma) = \sum_{\substack{(\mu_A)_{A\in P([d])} \, |\,  \mu_A \vdash n(A)\\ (\nu_A)_{A \in P([d])}\, | \, \nu_A \vdash m(A)}} \lrgsize{\delta}((\mu_A)_{A \in P([d])}, (\nu_A)_{A \in P([d])}) \frac{\prod_{c = 1}^d \Sym(\sum_{A \in F(c)} \mu_A)}{\prod_{A \in P([d])} \Sym(\mu_A)\Sym(\nu_A)},\eea
where: 
\begin{enumerate}
    \item[-] the sum over $\mu_A$ ($\nu_A$, resp.) denotes the sum over partitions  of $n(A)$ ($m(A)$, resp.) for a given $A \in P([d])$.
    \item[-]$\forall c \in [d], ~ F(c)= \{ A \in P([d]) ~|~ c \in A \}$.
    \item[-] $\Sym(\mu)$ is the symmetry factor associated to the integer partition $\mu$ (the cardinality of the stabilizer associated with the conjugacy class labeled by $\mu$). 
    \item[-] $\lrgsize{\delta}((\mu_A)_{A \in P([d])}, (\nu_A)_{A \in P([d])}) \equiv \prod_{c = 1}^d \delta(\sum_{A \in F(c)} \mu_A, \sum_{A \in F(c)} \nu_A)$, each $\delta$ being 1 when the integer partitions (or sum of partitions) are equal and 0 otherwise.
\end{enumerate}
\end{theorem}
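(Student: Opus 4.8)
The plan is to mirror the proof of Theorem~\ref{theo:JBandSajaye}, but organized color-by-color and type-by-type so as to exploit the product structure of the symmetry groups $H(\Lambda)$ and $H(\Gamma)$. First I would invoke Burnside's lemma (proposition~\ref{prop:Burnside}) for the action $\alpha(\Lambda,\Gamma)$ of $H(\Lambda)\times H(\Gamma)$ on $S=\bigtimes_{c=1}^d S(c)$, writing
\[
Z(\Lambda,\Gamma)=\frac{1}{|H(\Lambda)|\,|H(\Gamma)|}\sum_{(\pi,\eta)\in H(\Lambda)\times H(\Gamma)}\big|\{\s\in S\ :\ \eta.\s.\pi=\s\}\big|.
\]
Because $S$ is a direct product over colors and the action of definition~\ref{def:groupActionGeneralized} is componentwise, the fixed-point set factorizes: $|\{\s : \eta.\s.\pi=\s\}|=\prod_{c=1}^d |\{\s_c\in S(c):\eta_c\s_c\pi_c=\s_c\}|$, with $\pi_c=\pi[V|c]$ and $\eta_c=\eta[W|c]$.

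Next I would evaluate each color factor. Fixing a reference bijection identifies $S(c)$ with a symmetric group $S_{M_c}$ on $M_c:=\sum_{A\in F(c)}n(A)$ symbols, a well-defined integer equal also to $\sum_{A\in F(c)}m(A)$ by the compatibility of $\Lambda$ and $\Gamma$ (so $S(c)\neq\varnothing$). The equation $\eta_c\s_c\pi_c=\s_c$ is equivalent to $\s_c^{-1}\eta_c\s_c=\pi_c^{-1}$, so a solution exists only when $\eta_c$ and $\pi_c^{-1}$ --- equivalently $\eta_c$ and $\pi_c$, since a permutation and its inverse share a cycle type --- have equal cycle type, and in that case the number of solutions is the size of the centralizer of $\pi_c$, namely $\Sym$ of its cycle type. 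This single step produces both the Kronecker constraint $\lrgsize{\delta}$ and the numerator $\prod_{c}\Sym(\cdots)$ in \eqref{eq:orbitalCounting}, once the cycle type of $\pi_c$ is identified.

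That identification is the crux. By proposition~\ref{prop:decompositionPermutation}, $H(\Lambda)\cong\bigtimes_{A\in P([d])}S(\phi^{-1}(A))$, so $\pi$ is a product of disjoint permutations $\pi_A$ of the blocks $\phi^{-1}(A)$; since $V|c=\bigcup_{A\in F(c)}\phi^{-1}(A)$ is the disjoint union of exactly those blocks with $c\in A$, the restriction $\pi_c$ is the disjoint product of the $\pi_A$ with $A\in F(c)$, whence its cycle type is the multiset union $\sum_{A\in F(c)}\mu_A$ when $\pi_A$ lies in the class $\mu_A\vdash n(A)$. I would then regroup the double sum over $(\pi,\eta)$ according to the conjugacy data $(\mu_A)_A$ and $(\nu_A)_A$: the number of $\pi$ realizing prescribed classes is $\prod_A|C(\mu_A)|$, and likewise $\prod_A|C(\nu_A)|$ for $\eta$. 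The per-color condition becomes $\sum_{A\in F(c)}\mu_A=\sum_{A\in F(c)}\nu_A$ for every $c$, i.e.\ exactly $\lrgsize{\delta}((\mu_A)_{A}, (\nu_A)_{A})=1$. Finally, orbit-stabilizer (proposition~\ref{prop:orbstab}) for the conjugation action in each $S_{n(A)}$ gives $|C(\mu_A)|=n(A)!/\Sym(\mu_A)$ and $|C(\nu_A)|=m(A)!/\Sym(\nu_A)$; dividing by $|H(\Lambda)|\,|H(\Gamma)|=\prod_A n(A)!\,\prod_A m(A)!$ collapses the prefactor to $1/\prod_A\Sym(\mu_A)\Sym(\nu_A)$, yielding \eqref{eq:orbitalCounting}.

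The main obstacle I anticipate is bookkeeping rather than conceptual: one must keep straight that a single global $\pi$ restricts to the several overlapping sections $V|c$, so the numerator factors $\Sym(\sum_{A\in F(c)}\mu_A)$ are coupled across colors through the shared classes $(\mu_A)_A$, whereas the per-color fixed-point counts genuinely factorize because the components $\s_c$ range independently. Additional care is needed to confirm that the centralizer count is insensitive both to replacing $\pi_c$ by $\pi_c^{-1}$ and to the choice of reference bijection $V|c\to W|c$, and to check that blocks with $n(A)=0$ (in particular $A=\varnothing$) contribute only trivial factors, consistently with the empty partition carrying $\Sym=1$.
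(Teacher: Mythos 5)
Your proposal is correct and follows essentially the same route as the paper's proof: Burnside's lemma, color-by-color factorization of the fixed-point count, the decomposition of $H(\Lambda)$ and $H(\Gamma)$ via proposition~\ref{prop:decompositionPermutation} to identify the cycle type of $\pi_c$ as $\sum_{A\in F(c)}\mu_A$, the conjugacy constraint yielding $\lrgsize{\delta}$, the centralizer count giving the numerator, and orbit-stabilizer collapsing the prefactor. The subtleties you flag (reference bijection $V|c\to W|c$, ambivalence of cycle type under inversion, empty blocks) are exactly the ones the paper handles.
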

\begin{proof}
Let $\Lambda = (V, \phi)$ and $\Gamma = (W, \psi)$ two colored sets of vertices, and denote by $S=\bigtimes_{c = 1}^d S(c)$, the finite set on which the group $H(\Lambda) \times H(\Gamma)$ acts on. 
The Burnside lemma enables us to rewrite
the number of orbits of this action as: 
\bea 
\begin{aligned}
Z(\Lambda, \Gamma) &= \frac{1}{|H(\Lambda)\times H(\Gamma)|} \sum_{(\pi, \eta) \in H(\Lambda) \times H(\Gamma)} \left| S^{(\pi, \eta)} \right| \\
&= \frac{1}{|H(\Lambda)||H(\Gamma)|} \sum_{(\pi, \eta) \in H(\Lambda) \times H(\Gamma)} \sum_{\s \in S} \Delta(\eta.\s.\pi, \s) 
\end{aligned}
\eea 
where $S^{(\pi, \eta)}= \left\{ \s \in S \, | \, \eta.\s.\pi  = \s \right\}$ and $\Delta(\eta.\s.\pi, \s) = 1$ if $\eta.\s.\pi = \s$, 0 if not. We note that $\Delta(\eta.\s.\pi, \s)$ factorizes into a product of $\Delta$ on each factor subgroups as 
\bea 
\Delta(\eta.\s.\pi, \s) = \prod_{c = 1}^d \Delta(\eta_c \s_c \pi_c, \s_c) 
\eea
and infer 
\bea 
\label{eq:ZBurnLemm}
\begin{aligned}
Z(\Lambda, \Gamma) &= \frac{1}{|H(\Lambda)\times H(\Gamma)|} \sum_{(\pi, \eta) \in H(\Lambda) \times H(\Gamma)} \prod_{c = 1}^d \left( \sum_{\s_c \in S(c)} \Delta(\eta_c \s_c \pi_c, \s_c)\right).
\end{aligned}
\eea
Bearing in mind proposition \ref{prop:decompositionPermutation}, $H(\Lambda)$ (resp. $H(\Gamma)$) is isomorphic to $\bigtimes_{A \in P([d])} S(\phi^{-1}(A))$ (resp. $\bigtimes_{A \in P([d])} S(\psi^{-1}(A))$), in such a way that $\pi$ and $\eta$ admit a decomposition into disjoint permutations as: 
\bea 
\label{theo:proofCounting1}
\pi = \prod_{A \in P([d])} \tilde{\pi}_A \quad \text{ and }\quad  \eta = \prod_{A \in P([d])}
\tilde{\eta}_A  \,.  
\eea
For all $c$ in $[d]$, $\pi$ (resp. $\eta$) restricts to $V|c$ (resp. $W|c$) 
in the following way: 
\bea \label{theo:proofCounting2}
\pi_c = \prod_{A \in F(c)} \tilde{\pi}_A[V|c]
\quad (\text{ resp. } \eta_c = \prod_{A \in F(c)}
\tilde{\eta}_A [W|c]\; ) \;.
\eea
It is therefore possible to reformulate the sum over $H(\Lambda) \times H(\Gamma)$ as
\bea 
\begin{aligned}
Z(\Lambda, \Gamma) &= \frac{1}{|H(\Lambda)||H(\Gamma)|} \sum_{\pi_A \in S(\phi^{-1}(A))} \;\;\sum_{\eta_A \in S(\psi^{-1}(A))} \times \\
&\times \prod_{c = 1}^d \sum_{\s_c \in S(c)} \Delta(\prod_{A \in F(c)} \tilde{\eta}_A[W|c]\, \s_c\, \prod_{A \in F(c)} \tilde{\pi}_A[V|c], \s_c) 
\end{aligned}
\eea
We denote, for all $c$ in $[d]$,  by $\mu_c$ (resp. $\nu_c$) the integer partition associated to the conjugation class of $\pi_c$ (resp. $\eta_c$). Looking at the decomposition in disjoint permutations provided by equation \eqref{theo:proofCounting2}, we have 
\bea \label{theo:proofCounting3}
\mu_c = \sum_{A \in F(c)} \mu_A \quad  \text{ and } \quad \nu_c = \sum_{A \in F(c)} \nu_A,\eea
where for all $A$ in $P([d])$, $\mu_A \vdash n(A)$ (resp. $\nu_A \vdash m(A)$) is the integer partition associated to the conjugation class of $\pi_A$ (resp. $\eta_A$). Therefore, one can notice that,
\bea
\forall c \in [d], \, \Delta(\eta_c \s_c \pi_c, \s_c) = \Delta(\eta_c \s_c \pi_c, \s_c) \delta\left( \mu_c, \nu_c\right).
\eea
Indeed for $c$ in $[d]$, 
\bea \Delta(\eta_c \s_c \pi_c, \s_c) = 1  \Leftrightarrow \eta_c = \s_c \pi^{-1}_c \s^{-1}_c 
\eea
which implies that $\mu_c = \nu_c$,
because $\pi_c^{-1}$ is in the same conjugation class that $\pi_c$. Therefore, for every $A \in P([d])$, we can reformulate the sum over  $S(\phi^{-1}(A))$ (resp. $S(\psi^{-1}(A))$) by a sum over conjugacy classes $(C_A(\mu_A))_{\mu_A \vdash n(A)}$ (resp. $(C_A(\nu_A))_{\nu_A \vdash m(A)}$) that partition it. Then,
 \bea 
 \begin{aligned}
 Z(\Lambda, \Gamma) &= \frac{1}{|H(\Lambda)||H(\Gamma)|} \sum_{\substack{(\mu_A)_{A\in P([d])} \, |\,  \mu_A \vdash n(A)\\ (\nu_A)_{A \in P([d])}\, | \, \nu_A \vdash m(A)}} \;\;\sum_{\substack{(\pi_A)_{A \in P([d])} \, | \, \pi_A \in C_A(\mu_A) \\ (\eta_A)_{A \in P([d])} \, | \, \eta_A \in C_A(\nu_A) }} \times  \\
 &\times \prod_{c = 1}^d \sum_{\s_c \in S(c)}  \Delta(\eta_c \, \s_c\, \pi_c, \s_c)  \delta\left( \mu_c, \nu_c\right)\\
 &= \frac{1}{|H(\Lambda)||H(\Gamma)|} \sum_{\substack{(\mu_A)_{A\in P([d])} \, |\,  \mu_A \vdash n(A)\\ (\nu_A)_{A \in P([d])}\, | \, \nu_A \vdash m(A)}} \;\;\sum_{\substack{(\pi_A)_{A \in P([d])} \, | \, \pi_A \in C_A(\mu_A) \\ (\eta_A)_{A \in P([d])} \, | \, \eta_A \in C_A(\nu_A) }} \times  \\
 &\times \prod_{c = 1}^d \delta\left( \mu_c, \nu_c\right) \sum_{\s_c \in S(c)}  \Delta(\eta_c \, \s_c\, \pi_c, \s_c).
 \end{aligned}
 \eea
where $\pi_c, \eta_c, \mu_c$ and $\nu_c$ are given by equations \eqref{theo:proofCounting2} and \eqref{theo:proofCounting3}. Since $\delta\left( \mu_c, \nu_c\right)$ is in prefactor of the sum over $S(c)$, we can without loss of generality consider that $\pi_c$ and $\eta_c$ are such that $\mu_c = \nu_c$. Then $\pi_c$, $\pi_c^{-1}$ and $\eta_c$ have the same cycle structure and there exists $\r: V|c \longrightarrow W|c$ such that: $\eta_c = \r \pi_c^{-1} \r^{-1}$. The last term is rewritable as 
\bea
\begin{aligned}
\sum_{\s_c \in S(c)}  \Delta( \eta_c \s_c \pi_c, \s_c) &= \sum_{\s_c \in S(c)}  \Delta( \r \pi_c^{-1} \r^{-1}\s_c \pi_c, \s_c)
= \sum_{\s_c \in S(c)}  \Delta(\pi_c^{-1} \r^{-1}\s_c \pi_c, \r^{-1}\s_c)\\
&= \sum_{\s'_c \in S(V|c)}  \Delta(\pi_c^{-1} \s'_c \pi_c, \s'_c)
= \left| \left\{ \s'_c \in S(V|c) \, | \, \s'_c \pi_c \s_c^{'-1} = \pi_c \right\}\right|\\
&= \left|\Stab(\pi_c)\right|,
\end{aligned}
\eea
where we used the fact that the map 
$S(c) \rightarrow  S(V|c)$ sending 
$\s_c \mapsto \s'_c = \r^{-1}\s_c$
is a bijection between $S(c)$ and $S(V|c)$. 
This  allows a summation over $S(V|c)$ instead of $S(c)$. $\Stab(\pi_c)$ is the stabilizer of $\pi_c$ for the inner automorphism. With the orbit-stabilizer theorem (see proposition \ref{prop:orbstab}), we have 
\bea 
|S(V|c)| = |\Stab(\pi_c)||C(\pi_c)| \Leftrightarrow |\Stab(\pi_c)| = \frac{|S(V|c)|}{|C(\pi_c)|}, 
\eea
$|\Stab(\pi_c)|$ only depends of $\mu_c$, the integer partition associated to $C(\pi_c) = C(\mu_c)$, the conjugacy class of $\pi_c$. We introduce the symmetry factor $\Sym(\mu_c)$: $\Sym(\mu_c) = |\Stab(\mu_c)|$. Using equation \eqref{theo:proofCounting3}, $Z(\Lambda, \Gamma)$ takes the form:  
\bea 
\begin{aligned}
Z(\Lambda, \Gamma) &= \frac{1}{|H(\Lambda)||H(\Gamma)|} \sum_{\substack{(\mu_A)_{A\in P([d])} \, |\,  \mu_A \vdash n(A)\\ (\nu_A)_{A \in P([d])}\, | \, \nu_A \vdash m(A)}}
\;\;\;\sum_{\substack{(\pi_A)_{A \in P([d])} \, | \, \pi_A \in C_A(\mu_A) \\ (\eta_A)_{A \in P([d])} \, | \, \eta_A \in C_A(\nu_A) }} \times \\
&\times  \Sym \left( \sum_{A \in F(c)} \mu_A \right)\prod_{c = 1}^d \delta\left( \sum_{A \in F(c)} \mu_A, \sum_{A \in F(c)} \nu_A\right)\\
&= \frac{1}{|H(\Lambda)||H(\Gamma)|} \sum_{\substack{(\mu_A)_{A\in P([d])} \, |\,  \mu_A \vdash n(A)\\ (\nu_A)_{A \in P([d])}\, | \, \nu_A \vdash m(A)}}  \lrgsize{\delta}((\mu_A)_{A \in P([d])}, (\nu_A)_{A \in P([d])})  \times \\
&\times \left(\prod_{A \in P([d])}|C_A(\mu_A)||C_A(\nu_A)|\right)\prod_{c = 1}^d \Sym \left( \sum_{A \in F(c)} \mu_A \right)\\
&= \frac{\prod_{A \in P([d])} |S(\phi^{-1}(A))||S(\psi^{-1}(A))|}{|H(\Lambda)||H(\Gamma)|} \times \\
&\times \sum_{\substack{(\mu_A)_{A\in P([d])} \, |\,  \mu_A \vdash n(A)\\ (\nu_A)_{A \in P([d])}\, | \, \nu_A \vdash m(A)}}  \lrgsize{\delta}((\mu_A)_{A \in P([d])}, (\nu_A)_{A \in P([d])}) \frac{\prod_{c = 1}^d \Sym \left( \sum_{A \in F(c)} \mu_A \right)}{\prod_{A \in P([d])} \Sym(\mu_A)\Sym(\nu_A)}.\\
\end{aligned}
\eea
The prefactor fraction equals 1 by propostion \ref{prop:decompositionPermutation}, 
thus giving  the expected formula.

\end{proof}

\begin{corollary}
Let $G= (\Lambda, \Gamma, \s)$ be a colored bipartite graph of chromatic index $d \in \N^*$ with $\Lambda = (V, \phi)$ and $\Gamma = (W, \psi)$. Let $n$ and $m$ be  the cardinality functions of $\Lambda$ and $\Gamma$, respectively. We assume condition $(i)$ of proposition \ref{prop:particularCaseContraction} holds.
Then, $|V| = |W|$, and
\bea Z(\Lambda, \Gamma) = Z^d_{|V|}, \eea
where $Z^d_{|V|}$ is given by theorem \ref{theo:JBandSajaye}. 
\end{corollary}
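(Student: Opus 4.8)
The plan is to obtain the corollary as an immediate specialization of Proposition~\ref{prop:paritcularGroupAction}, preceded by a one-line verification that $|V| = |W|$. First I would record the equality of cardinalities. Condition $(i)$ of Proposition~\ref{prop:particularCaseContraction} forces $\phi(v) = [d]$ for every $v \in V$ and $\psi(w) = [d]$ for every $w \in W$, so every color $c \in [d]$ belongs to the type of every vertex. Hence the colored sections satisfy $V|c = V$ and $W|c = W$, and the color multiplicities are $m_c = |V|$ and $m'_c = |W|$ for all $c$. Since $\Lambda$ and $\Gamma$ are compatible — a prerequisite for the colored bipartite graph $G$ of Definition~\ref{def:bipartiteColoredGraph} to be defined — their multiplicities coincide, giving $|V| = |W| =: n$.

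Next I would invoke Proposition~\ref{prop:paritcularGroupAction}, which asserts precisely that under condition $(i)$ the action $\alpha(\Lambda, \Gamma)$ equals $\alpha^d_n$, modulo the isomorphisms $H(\Lambda) \cong S_n \cong H(\Gamma)$ and $S \cong S_n^{\times d}$. The number of orbits of a group action is an invariant of the action up to such isomorphism — it is produced by Burnside's lemma from the acting group and the set, both of which are matched by these isomorphisms — so the two orbit counts agree. Theorem~\ref{theo:CountMultiOrder} names the left-hand count $Z(\Lambda, \Gamma)$ and Theorem~\ref{theo:JBandSajaye} names the right-hand count $Z^d_n$, whence $Z(\Lambda, \Gamma) = Z^d_n = Z^d_{|V|}$.

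As a self-contained cross-check I would instead specialize the master formula~\eqref{eq:orbitalCounting} directly. Under condition $(i)$ the only nonempty fiber of $\phi$ is $\phi^{-1}([d]) = V$, and likewise $\psi^{-1}([d]) = W$, so the cardinality functions obey $n(A) = m(A) = 0$ for every $A \neq [d]$. This forces $\mu_A$ and $\nu_A$ to be the empty partition (with $\Sym = 1$) whenever $A \neq [d]$, collapsing the products over $P([d])$ to their $A = [d]$ factor. Since $[d] \in F(c)$ for every color $c$, the inner sums reduce to $\sum_{A \in F(c)} \mu_A = \mu_{[d]}$ and $\sum_{A \in F(c)} \nu_A = \nu_{[d]}$, the constraint $\lrgsize{\delta}$ becomes $\prod_{c=1}^d \delta(\mu_{[d]}, \nu_{[d]})$, and~\eqref{eq:orbitalCounting} simplifies to
\bea
Z(\Lambda, \Gamma) = \sum_{\mu \,\vdash\, n} \frac{\Sym(\mu)^d}{\Sym(\mu)^2} = \sum_{\mu \,\vdash\, n} \Sym(\mu)^{d-2},
\eea
which is exactly $Z^d_n$ by Theorem~\ref{theo:JBandSajaye}.

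I do not anticipate a genuine obstacle: the corollary is a direct specialization, and the only point requiring care is bookkeeping. In the abstract route this amounts to stating cleanly that the orbit count is transported by the group-action isomorphism; in the computational route it amounts to handling the empty-partition factors correctly (using $\Sym(\varnothing) = 1$) and to observing that $[d]$ lies in every $F(c)$, so that each inner sum collapses to the single partition $\mu_{[d]}$.
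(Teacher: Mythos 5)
Your proposal is correct and follows essentially the same route as the paper: the main argument invokes Proposition~\ref{prop:paritcularGroupAction} to identify the two group actions and hence their orbit counts, and your direct specialization of formula~\eqref{eq:orbitalCounting} reproduces verbatim the consistency check the paper itself carries out immediately after the corollary. No gaps.
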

\begin{proof}
Since condition $(i)$ is verified, one could directly apply proposition \ref{prop:paritcularGroupAction}: $\alpha(\Lambda, \Gamma) = \alpha^d_n$. The two group actions are equal and necessarily admit the same number of orbits. 
Theorem  \ref{theo:CountMultiOrder}
tells us that this number of orbits is $Z(\Lambda, \Gamma) = Z^d_{|V|}$.

\end{proof}

It could be instructive to show how equation \eqref{eq:orbitalCounting} reduces to the counting at fixed order $d$. In other words, we can show that
\bea 
Z(\Lambda, \Gamma) = Z^d_{|V|} = \sum_{\mu ~ \vdash ~ |V|} \Sym(\mu)^{d-2} \,. 
\eea
First, observe that with condition $(i)$
\bea 
\forall A \in P([d])\backslash[d],\, 
n(A) = |\phi^{-1}(A)| = 0 = |\psi^{-1}(A)| = m(A)\,.
\eea
Proposition \ref{prop:paritcularGroupAction} entails $n([d]) = |V| = |W| = m([d])$. Equation \eqref{eq:orbitalCounting} simplifies because the partition $\mu$ of $0$ is the empty sum and $\Sym(\mu) = 1$ by convention, leading to:
\bea
\begin{aligned}
Z(\Lambda, \Gamma) 
&= \sum_{\substack{\mu ~ \vdash ~ |V| \\ \nu ~ \vdash ~ |W|}} \delta(\mu, \nu) \frac{\prod_{c = 1}^d \Sym(\mu)}{\Sym(\mu)\Sym(\nu)} 
= \sum_{\mu ~ \vdash ~ |V|} \frac{ \Sym(\mu)^d}{\Sym(\mu)\Sym(\mu)} \\
&= \sum_{\mu ~ \vdash ~ |V|} \Sym(\mu)^{d-2} \,. 
\end{aligned}
\eea
which is what we expect. 
\section{An application}
\label{sec:application}

In this section, we apply theorem \ref{theo:CountMultiOrder} to enumerate observables constructed from third-order tensors coupled with a specified number of matrices and vectors. The Python code for generating the integer sequences related to this counting is provided in appendix \ref{app:Code}. To validate the results, we conduct a combinatorial analysis on a simple yet nontrivial case.

\subsection{Some multiple-order contractions}

We define  $G= (\Lambda, \Gamma, \s)$ a colored bipartite graph of chromatic index $d = 3$ with:

\begin{enumerate}
    \item[(1)] $\Lambda = (V, \phi)$ of cardinality function $n$. $V=\{1,2,3\}$ and $\phi(1) = \phi(2) = [3]$, $\phi(3) = \{1,3\}$.
    \item[(2)] $\Gamma = (W, \psi)$ of cardinality function $m$. $W=\{1,2,3,4\}$ and $\psi(1) = \psi(2) = [3]$, $\psi(3) = \{1\}$, $\psi(4)=\{3\}$.
    \item[(3)] $\s = (\s_1, \s_2, \s_3)$ an element of $S= S(1) \times  S(2) \times  S(3)$.
\end{enumerate}

Let $(T^{(v)})_{v \in V}$ be a colored family of tensors associated with $\Lambda$, 
and $(R^{(w)})_{w \in W}$ be a colored family of tensors associated with $\Gamma$. An illustration of $G$ is given in figure \ref{fig:IllusPregraph} where vertices and their colored half-edges are about to be connected \textit{via} the colored set of bijections. Colors $1, 2$ and $3$ are arbitrarily taken to be, respectively, red, green and blue. The unitary multiple-order contraction corresponding to definition \ref{def:CtrOrdreMult}, $I(G; (T^{(v)})_{v \in V}, (R^{(w)})_{w \in W})$, is
\bea
I(G; (T^{(v)})_{v \in V}, (R^{(w)})_{w \in W}) = K(G) \,\,  T^{(1)}_{i_1i_2i_3}T^{(2)}_{i_1i_2i_3}T^{(3)}_{i_1i_3} \, \,\ov{R}^{(1)}_{j_1j_2j_3}\ov{R}^{(2)}_{j_1j_2j_3}\ov{R}^{(3)}_{j_1}\ov{R}^{(4)}_{j_3},\eea
where $T^{(1)} = T^{(2)}$, $R^{(1)} = R^{(2)}$ and the kernel is given by 
\bea 
\begin{aligned}
K(G) =&\delta(i_1^{(1)}, j_1^{(\s_1(1))}) \delta(i_1^{(2)}, j_1^{(\s_1(2))})\delta(i_1^{(3)}, j_1^{(\s_1(3))}) \\
& \times \delta(i_2^{(1)}, j_2^{(\s_2(1))}) \delta(i_2^{(2)}, j_2^{(\s_2(2))})\\
& \times \delta(i_3^{(1)}, j_3^{(\s_3(1))}) \delta(i_3^{(2)}, j_2^{(\s_3(2))}) \delta(i_3^{(3)}, j_3^{(\s_3(3))})\,. 
\end{aligned}
\eea
\begin{figure}
\begin{center}
    \includegraphics[scale= 0.2]{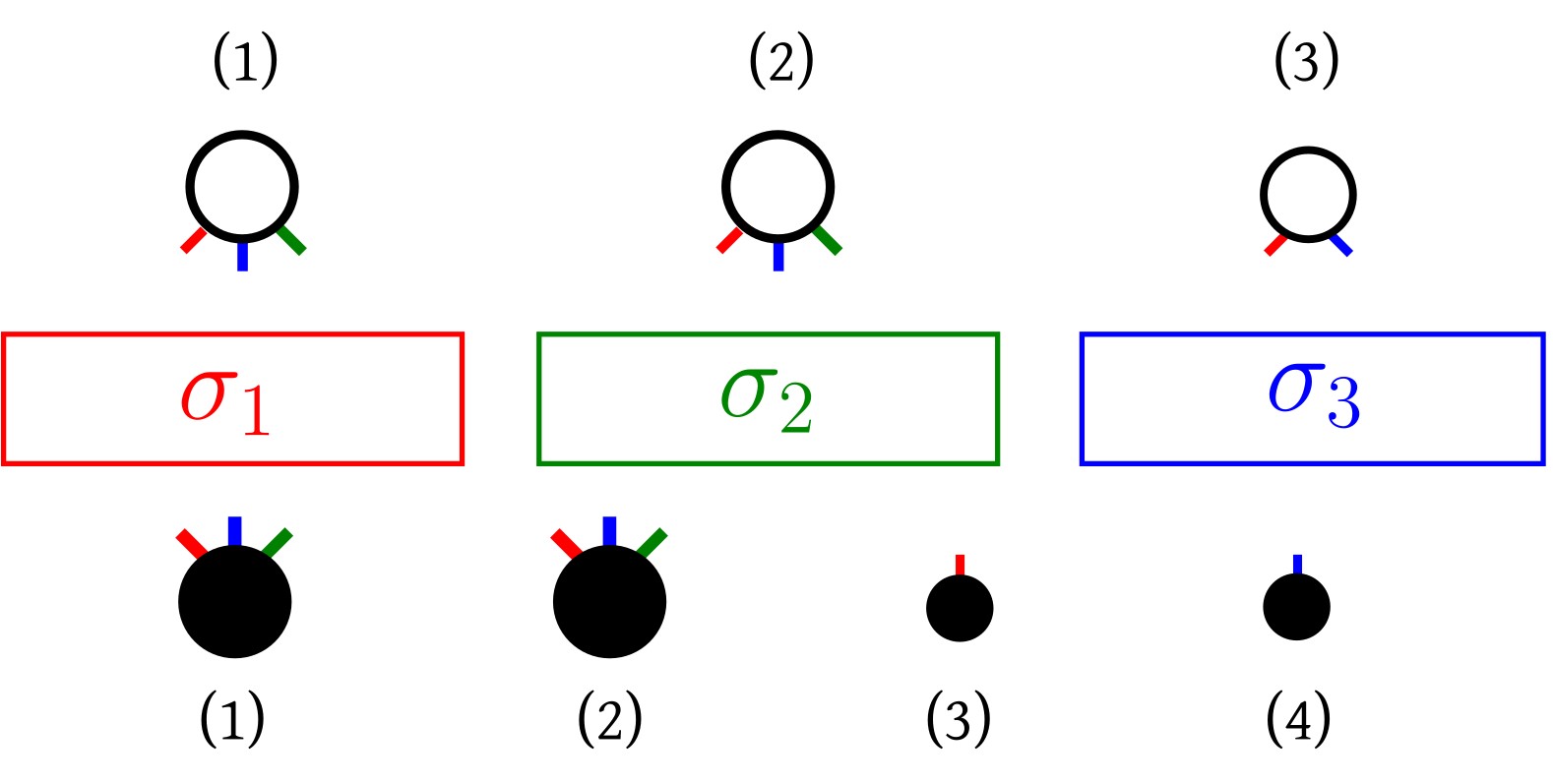}
\end{center}
\caption{Illustration of an edge-colored bipartite graph.}
\label{fig:IllusPregraph}
\end{figure}
Thanks to theorem \ref{theo:CountMultiOrder}, it is possible to count non-isomorphic contractions \textit{i.e.} non-isomorphic edge-colored bipartite graphs associated with $\Lambda$ and $\Gamma$. A Python implementation of the formula is given in appendix \ref{app:Code} with the code. We obtain 20 orbits of the group action. 

\subsection{Combinatorial proof}

The 20 orbits produced by the main counting scheme
can be recovered through combinatorial arguments.
The present section provides the proof
of this enumeration.

First, consider the case where the order-3 tensor part is disconnected from the matrix–vector part. It is known that a graph composed of two tensors $T$ and two tensors $\ov{T}$ of order 3 admits 4 non-isomorphic graphs (see table \ref{tab:formule_simple}, classical counting at fixed order). Since there is only one possible contraction for the matrix–vector part, this yields 4 non-isomorphic graphs in total.
These four graphs are shown in Figure~\ref{fig:1à4}.

\begin{figure}[ht]
\centering
\includegraphics[scale=0.37]{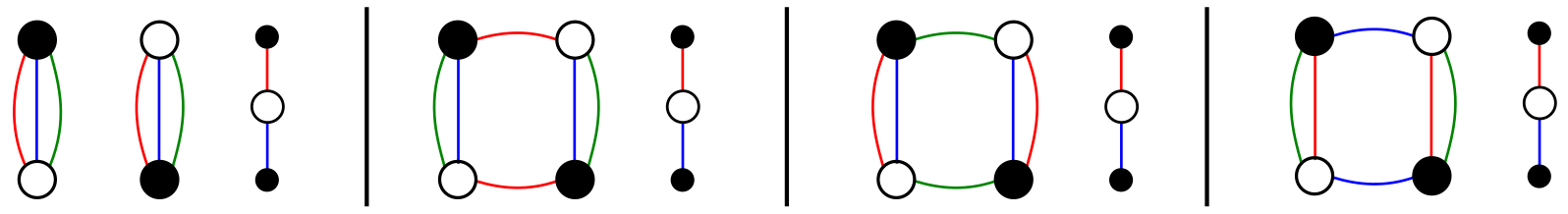}
\caption{Four graphs representing 
multiple-order contractions of tensors
issued from figure \ref{fig:IllusPregraph}: 
the order-3 tensors are not connected
to the vector and matrix fields.}
\label{fig:1à4}
\end{figure}

Starting from these 4 graphs, we construct the remaining 16 by perturbing the third-order tensor sector. To determine these contractions, we apply well-known graph-theoretic operations: cutting and gluing edges and half-edges. It is essential to ensure that the final graph remains edge-colored and bipartite, preserving 
the underlying unitary invariance of the tensor contraction. 

Select one of these graphs and choose one of the two red edges (which are symmetric). Cut the red edge, leaving two red half-edges on the vertices where it was previously incident. Next, consider the red half-edges of the vector and matrix fields, and glue them onto the two free half-edges in the tensor sector. From each graph in figure \ref{fig:1à4}, this process generates a new graph, depicted in figure \ref{fig:5à8}, corresponding to a new invariant.

\begin{figure}[ht]
\centering
\includegraphics[scale=0.37]{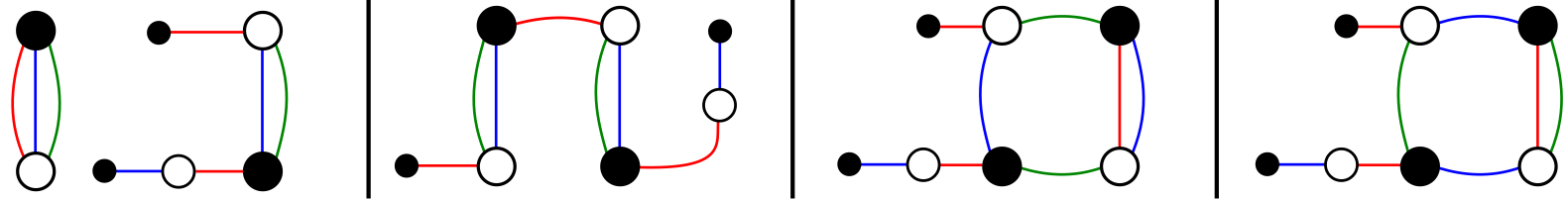}
\caption{Four graphs are obtained from figure \ref{fig:IllusPregraph} by gluing the red vector and matrix field to two different tensors with red half-edges.}
\label{fig:5à8}
\end{figure}

In a symmetric way, choosing the color blue
rather than the red in the above construction, we obtain 4 additional graphs. To generate the remaining eight configurations, we simultaneously contract the red and blue vectors within the third-order tensor sector. This requires cutting one red and one blue edges and then connecting the two vector fields. For each graph in Figure \ref{fig:1à4}, the vectors can be attached either to the same white vertex or to two different ones, resulting in exactly $2 \times 4$ distinct graphs, as shown in Figure \ref{fig:13à20}.

\begin{figure}[ht]
\centering
\includegraphics[scale=0.37]{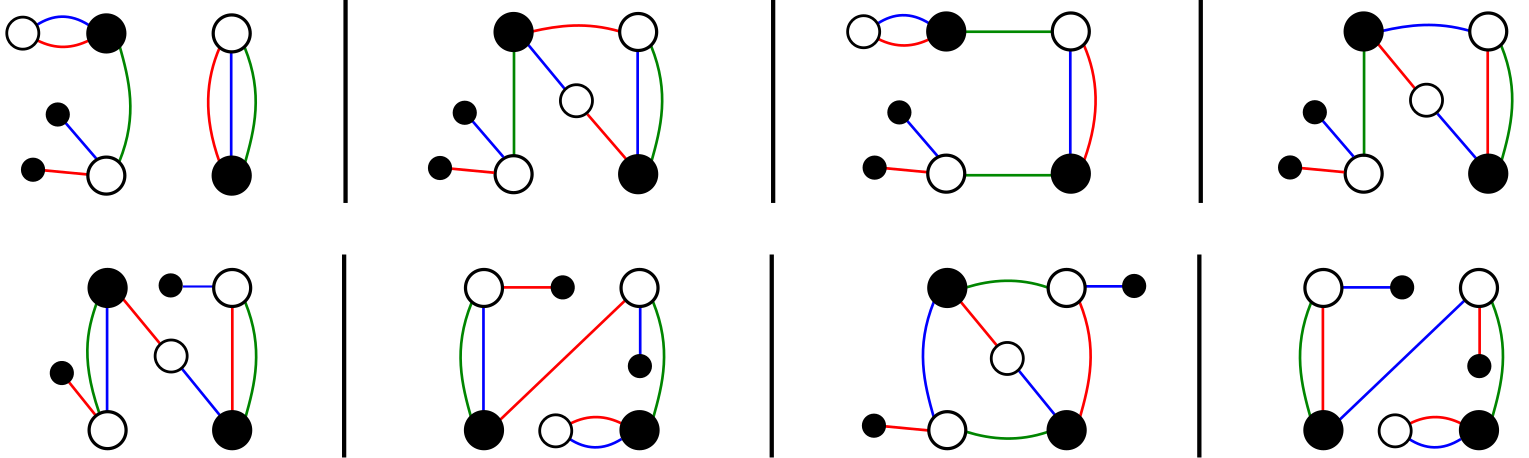}
\caption{Height configurations obtained from figure \ref{fig:IllusPregraph} by gluing the two vectors to one or two tensors.}
\label{fig:13à20}
\end{figure}

\section{Concluding remarks}
\label{sec:concl}

The definition of multiple-order tensor contraction is provided for any $d' \le d$, where $d$ is a positive integer, demonstrating that these tensor contractions remain invariant under the action of a tensor product of unitary groups and a (symmetric) group action on the set of complex tensors. By employing Burnside's lemma, a general counting formula is derived, extending the one introduced in previous work \cite{BenGeloun:2013lim} to a broader context. As an illustration, the formalism is applied to count invariants in a specific scenario and is complemented by a combinatorial approach to verify its consistency. It becomes evident that developing a combinatorial strategy for counting such invariants rapidly becomes intractable as the number of tensors increases, suggesting that computer-based counting methods may provide a unique
tractable way to determine all inequivalent configurations.

The perspectives of this work are numerous. Extending the Tensor Theory Space raises questions about the stability of these models under renormalization group flow. Assuming that vectors can be associated with matter degrees of freedom, a theory involving order $d$ tensors and vectors might resemble a $d$-dimensional gravity coupled with a vector matter field. However, this matter differs from that considered in Group Field Theory, where local degrees of freedom are associated with the tensor field content \cite{Oriti:2016ueo}. In any case, the perturbative renormalizability in such a Theory Space might differ significantly compared to that of tensor models with fixed order, starting with the Grosse-Wulkenhaar model \cite{Grosse_2005}, and even TFTs of fixed orders \cite{BenGeloun:2011rc, BenGeloun:2013vwi}. At the nonperturbative level, Functional Renormalization Group analyses should be conducted for this class of models to understand their critical behavior, which might reveal different properties compared to known TFTs with fixed order \cite{Benedetti:1411, BenGeloun:1508,BenGeloun:1601, BenGeloun:2016tmc}. 

As a concrete approach to the renormalization
group analysis of the multiple-order invariants, we present a coupled tensor-vector model generalizing the  
Grosse-Wulkenhaar model. Consider a complex order-3 tensor $(M_{mnp})_{m,n,p\in \N}$, and a vector $(\phi_k)_{k\in \N}$ of complex coordinates. 
An interesting candidate $(M^4\phi^4)$-like action is given by 
\bea
S[M,\phi] &=& 
\sum_{m,n,p} \ov{M}_{mnp}(m^{\alpha} + n^{\alpha} +p^{\alpha}+ \mu_1)M_{mnp}
+ \sum_{p} \ov{\phi}_{p}(p^{\beta}+ \mu_2)\phi_{p}
\crcr
&+&\lambda  \sum_{a,b,c,d,e,f,g,h} 
M_{{ a} { b} { e}}\ov{M}_{{ c} { b} { f}}M_{{ c} { d} { g}}\ov{M}_{{ a} { d} { h}} \phi_{ h} \ov{\phi}_{ e} \phi_{ f} \ov{\phi}_{ g}   
\eea
where $\alpha> 0$ and $\beta>0$ are real numbers, 
$\mu_1$ and $\mu_2$ play the role of mass couplings, and $\lambda$ the interaction coupling.  
The interaction is cyclic and pictured in 
figure \ref{fig:GrossEtWulkenhaar}. 
This action may lead to divergences for specific $\alpha$ and $\beta$. Establishing a power counting of these divergences would certainly be an interesting prospect.\begin{figure}[ht]
\centering
\includegraphics[scale=0.2]{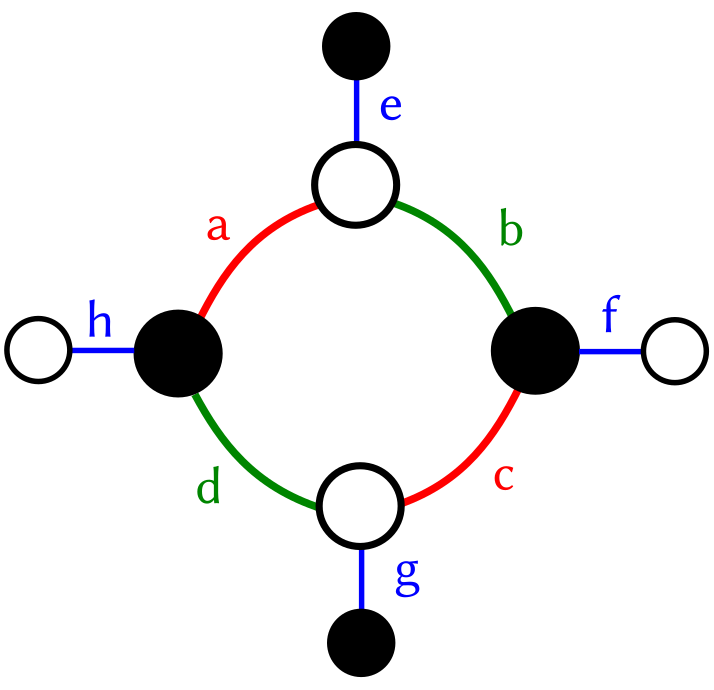}
\caption{A possible multiple-order
contraction generalizing the Grosse-Wulkenhaar trace interaction.}
\label{fig:GrossEtWulkenhaar}
\end{figure}

Another unexplored topic is the Topological Field Theory interpretation of the present counting formula. Could this formula count some covers of given topologies ? Generally, the connection is established \textit{via} Burnside's lemma. A closer examination of the proof of Theorem \ref{theo:CountMultiOrder} suggests that the required interpretation must be encoded   in \eqref{eq:ZBurnLemm}. Identifying the 2D complex associated with the Topological Field Theory linked to this counting remains crucial. While it might appear obvious from \eqref{eq:ZBurnLemm}, careful attention is needed as this action represents distinct subgroup actions across each component $\s_i$. Only a thorough analysis can provide an answer to this question.  
Another main theme in the domain consists in the computation of correlators in this framework and the identification of an effective representation basis for these correlators \cite{BenGeloun:2017vwn}. Once again, the representation theory of the symmetric group will serve as a central tool for addressing these questions.

\section*{Appendix}
\appendix
\renewcommand{\theproposition}{\Alph{section}.\arabic{proposition}}
\renewcommand{\theequation}{\Alph{section}.\arabic{equation}}
\renewcommand{\thesection}{\Alph{section}}
\setcounter{equation}{0} 
\setcounter{section}{0}  
\setcounter{proposition}{0}

\section{Invariance of fixed order contractions}
\label{app:Inv}
In this subsection, we provide the proof of proposition~\ref{prop:InvarianceContraction}. We then recall Burnside’s lemma and the orbit-stabilizer theorem, whose proofs can be found in any standard textbook on algebra, as they will be used repeatedly throughout the paper.

\begin{proposition*} 
Let $d,n \in \N^*$, $n$ tensors $T$ and $n$ tensors $\ov{T}$, both of order $d$ and each defined over a complex vector space $E$ of dimension $N$. Denote by $U(N)$ the unitary group, and consider the natural action of $U(N)^{\otimes d}$ on tensors of order $d$ via the fundamental representation on each index.

Let $\s \in S_n^{\times d}$ and
 let $I(\s; T)$ denote a specific index contraction between the tensors $T$ and $\ov{T}$ determined by $\s$.
 Then, 
\begin{itemize}
\item[(i)] $I(\s; T)$ is invariant under the action of $U(N)^{\otimes d}$ on the tensors $T$ and $\ov{T}$.
\item[(ii)]
 $\forall \g, \r \in S_n$, $I((\r\s_1\g, \cdots, \r\s_d\g); T) = I((\s_1, \cdots, \s_d); T) = I(\s; T)$.
\end{itemize}
\end{proposition*}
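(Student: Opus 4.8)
The plan is to prove both invariances directly from the explicit algebraic formula for $I(\s;T)$ given in Definition~\ref{def:ContrcUnitaireMonoTenseur}, exploiting the fact that every index is a summed (dummy) variable. For part $(i)$, I would start by writing out how each tensor transforms under $U(N)^{\otimes d}$: using the change-of-basis rule \eqref{eq:chtTenseurBase} with $\Lambda = U \in U(N)$, each $T_{i_1^{(j)}\cdots i_d^{(j)}}$ picks up a factor $U_{i_c'^{(j)} i_c^{(j)}}$ for each color $c$, while each $\ov{T}$ picks up the conjugate factors $\ov{U}_{\bar i_c'^{(k)} \bar i_c^{(k)}}$. The key step is to insert these factors into $I(\s;T)$, then regroup the product of $U$'s, $\ov{U}$'s and Kronecker deltas color by color. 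For a fixed color $c$, the kernel supplies $\delta(i_c^{(l)}, \bar i_c^{(\s_c(l))})$, which forces the index on the $\ov{T}$ side to match that on the $T$ side; contracting these deltas against the unitary factors produces, for each pair, the combination $U_{i_c'^{(l)} i_c^{(l)}}\, \ov{U}_{i_c'^{(l)} \bar i_c^{(\s_c(l))}}$ summed over $i_c'^{(l)}$. The main computational point is then the unitarity identity $\sum_b \ov{U}_{b a} U_{b c} = \delta(a,c)$, which collapses each such combination back to a Kronecker delta, recovering the original kernel and hence $I'(\s;T) = I(\s;T)$. This is exactly the fixed-order analogue of the computation carried out in the proof of Theorem~\ref{theo:InvarianceUnderU}, so I expect it to be routine.

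For part $(ii)$, the cleanest route is to show invariance under the two generators separately, i.e. left multiplication $\s_c \mapsto \r\s_c$ and right multiplication $\s_c \mapsto \s_c\g$, and then compose. The underlying principle is that the $n$ tensors $T$ are indistinguishable and likewise the $n$ tensors $\ov{T}$, so relabelling them does not change the value of the contraction. Concretely, I would observe that the kernel $K = \prod_{l=1}^n\prod_{c=1}^d \delta(i_c^{(l)}, \bar i_c^{(\s_c(l))})$ depends on the labels of the $T$'s through the index $l$ and on the labels of the $\ov{T}$'s through $\s_c(l)$. Relabelling the conjugate tensors by $\r \in S_n$ amounts to the substitution $\bar i_c^{(k)} \mapsto \bar i_c^{(\r(k))}$ in every conjugate factor; since these indices are dummy summation variables and the product $\prod_k \ov{T}_{\bar i_1^{(k)}\cdots}$ is symmetric under permuting its factors, this substitution sends each $\delta(i_c^{(l)},\bar i_c^{(\s_c(l))})$ to $\delta(i_c^{(l)},\bar i_c^{(\r\s_c(l))})$, yielding $I(\r\s;T)=I(\s;T)$. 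Symmetrically, relabelling the $T$'s by $\g^{-1}$, i.e. the substitution $l \mapsto \g(l)$ in the summation over tensors $T$, turns the argument $\s_c(l)$ into $\s_c\g(l)$ after reindexing, giving $I(\s\g;T)=I(\s;T)$.

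Composing the two substitutions yields the full statement $I((\r\s_1\g,\dots,\r\s_d\g);T)=I(\s;T)$ for all $\g,\r \in S_n$. I would take care to state precisely why the relabelling is legitimate: it is the bosonic symmetry already emphasised after Definition~\ref{def:ContrcUnitaireMonoTenseur}, namely that the products $\prod_{j} T_{\cdots}$ and $\prod_{k}\ov{T}_{\cdots}$ are invariant under simultaneous permutation of their factors together with the corresponding summed indices. The only genuinely delicate point — and the place where I would be most careful — is bookkeeping the direction of the index substitution so that a relabelling by $\r$ of the $\ov{T}$'s composes with $\s_c$ on the \emph{left} while a relabelling of the $T$'s composes on the \emph{right}; an off-by-an-inverse error here would produce $\r^{-1}$ or $\g^{-1}$ instead of the stated form. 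Everything else reduces to the elementary manipulation of Kronecker deltas and the commutativity of the product of identical tensors, so no further machinery is needed.
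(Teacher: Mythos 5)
Your proposal is correct and follows essentially the same route as the paper's proof in Appendix A: part $(i)$ by inserting the unitary transformation factors and collapsing them against the contraction kernel via the unitarity identity $\sum_b \ov{U}_{ba}U_{bc}=\delta(a,c)$, and part $(ii)$ by relabelling dummy indices using the indistinguishability of the $n$ tensors $T$ and the $n$ tensors $\ov{T}$. The only cosmetic difference is in $(ii)$: the paper decomposes the two-sided action into a conjugation $\s\mapsto\g\s\g^{-1}$ plus a right multiplication $\s\mapsto\s\r$ and then combines them, whereas you use left and right multiplications separately; both reduce to the same relabelling argument, and the off-by-an-inverse issue you flag is harmless since the statement quantifies over all $\g,\r\in S_n$.
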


\begin{proof}
For $(i)$, we directly apply the rules (given by equation \eqref{eq:chtTenseurBase}) for the transformation of tensor coefficients for a unitary basis change matrix $U=[U_{ij}]_{i \le i,j \le N} \in U(N)$. We have:
\bea
\begin{aligned}
I(\s; T) &=  K(\s, \{i_1^{(j)}, \cdots, i_d^{(j)}\}_j, \{\bar i_1^{(k)} ,\cdots, i_d^{(k)}\}_k)\prod_{j = 1}^n T'_{i_1^{(j)} \cdots i_d^{(j)}} \prod_{k=1}^n \ov{T}'_{\bar i_1^{(k)} \cdots \bar i_d^{(k)}}\\
&= \prod_{j = 1}^n T'_{i_1^{(j)} \cdots i_d^{(j)}} \prod_{k=1}^n \ov{T}'_{ i_1^{\s_1(k)} \cdots i_d^{\s_d(k)}}.
\end{aligned}
\eea
In this second line, we have eliminated the contraction kernel by integrating all   Kronecker deltas. The indices $\bar{i}^{(j)}$ have thus disappeared, replaced by the indices $i^{\s(j)}$ to which they were linked. Now, applying the basis change, we obtain:
\bea
\begin{aligned}
I(\s; T) &= \prod_{j = 1}^n U_{i_1^{(j)} a_1^{(j)}}\cdots U_{i_d^{(j)} a_d^{(j)}}  T_{a_1^{(j)} \cdots a_d^{(j)}}\prod_{k=1}^n \ov{U}_{i_1^{\s_1(k)} b_1^{\s_1(k)}}\cdots\ov{U}_{i_d^{\s_d(k)} b_d^{\s_d(k)}} \ov{T}_{b_1^{\s_1(k)} \cdots b_d^{(\s_(k)}}\\
&=\prod_{l=1}^n U_{i_1^{(l)} a_1^{(l)}}\ov{U}_{i_1^{(l)} b_1^{(l)}} \cdots U_{i_d^{(l)} a_d^{(l)}}\ov{U}_{i_d^{(l)} b_d^{(l)}}\prod_{j = 1}^n T_{a_1^{(j)} \cdots a_d^{(j)}}\prod_{k=1}^n \ov{T}_{b_1^{\s_1(k)} \cdots b_d^{\s_d(k)}}.
\end{aligned}
\eea
In the last line, we paired the matrices because  when we change an index $i_c^{(l)}$ of a tensor $T$ by a coefficient $U_{i_c^{(l)}a_c^{(l)}}$, there is a tensor $\ov{T}$ that is contracted with the same index $i_c^{(l)}$, producing a coefficient $\ov{U}_{i_c^{(l)}b_c^{(l)}}$. These two coefficients are combined and must cancel using the fact that the matrix $U$ is unitary (\textit{i.e.}, $\ov{U}_{ij} = U^{-1}_{ji}$).  One gets: 
\bea
\begin{aligned}
I(\s; T')
&= \prod_{l=1}^n \delta(b_1^{(l)}, a_1^{(l)}) \cdots \delta(b_d^{(l)}, a_d^{(l)}) \prod_{j = 1}^n T_{a_1^{(j)} \cdots a_d^{(j)}}\prod_{k=1}^n \ov{T}_{b_1^{\s_1(k)} \cdots b_d^{\s_d(k)}}\\
&= \prod_{j = 1}^n T_{a_1^{(j)} \cdots a_d^{(j)}}\prod_{k=1}^n \ov{T}_{a_1^{\s_1(k)} \cdots a_d^{\s_d(k)}} \\
&= K(\s, \{i_1^{(j)}, \cdots, i_d^{(j)}\}_j, \{\bar i_1^{(k)} ,\cdots, i_d^{(k)}\}_k)\prod_{j = 1}^n T_{a_1^{(j)} \cdots a_d^{(j)}}\prod_{k=1}^n \ov{T}_{a_1^{(k)} \cdots a_d^{(k)}}\\
&= I(\s; T).
\end{aligned}
\eea
We now focus on $(ii)$. Let $\g, \r \in S_n$, then we claim that:
\bea K(\s, \{i_1^{(j)}, \cdots, i_d^{(j)}\}_j, \{\bar i_1^{(k)} ,\cdots, i_d^{(k)}\}_k) = \prod_{l=1}^n \prod_{c=1}^d \delta(i_c^{\s_c(l)}, \bar i_c^{(l)}) =  \prod_{l=1}^n \prod_{c=1}^d \delta(i_c^{\g\s_c(l)}, \bar i_c^{\g(l)}) \eea

We implement the fact that the numerical values of the contracted index exponents do not matter. What is important is that this value is unique and allows the contracted indices to be distinguished from each other. Therefore, these indices can be redefined by applying a permutation, say $\g$, to all of them. The following equality is obtained by making the variable substitution $l=\g^{-1}(l')$:
\bea
\begin{aligned}
\prod_{l=1}^n \prod_{c=1}^d \delta(i_c^{\g\s_c(l)}, \bar i_c^{\g(l)})
&= \prod_{l'=1}^n \prod_{c=1}^d \delta(i_c^{\g\s_c(\g^{-1}(l'))}, \bar i_c^{l'})\\
&=  K(\g\s\g^{-1}, \{i_1^{(j)}, \cdots, i_d^{(j)}\}_j, \{\bar i_1^{(k)} ,\cdots, i_d^{(k)}\}_k).
\end{aligned}
\eea
This gives us the equality: 
\bea \label{eq:demoinv1} 
I(\s;T) = I(\g\s\g^{-1};T).
\eea
On the other hand, when working with a contraction $I(\s;T)$, it is possible to change the order of the tensors in the product by making the variable substitution $k = \r(k')$.
\bea 
\label{eq:demoinv2}
I(\s;T) = \prod_{j = 1}^n T_{i_1^{(j)} \cdots i_d^{(j)}} \prod_{k'=1}^n \ov{T}_{ i_1^{\s_1(\r(k'))} \cdots i_d^{\s_d(\r(k'))}} = I(\s\r;T).
\eea
Combining equations \eqref{eq:demoinv1} and \eqref{eq:demoinv2} yields the 
desired point  $(ii)$.
\end{proof}

The main counting procedures of the unitary invariants intensively based 
on two  basic theorems in group theory. We recall them here. 

\begin{proposition}[Orbit-stabilizer theorem]
\label{prop:orbstab} 
Let $G$ be a finite group acting on a finite set X, et $x \in X$. We have
\bea| G | = |\Stab(x)||\Orb(x)|.\eea
\end{proposition}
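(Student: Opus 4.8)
The plan is to establish the identity by exhibiting an explicit bijection between the orbit $\Orb(x)$ and the set of left cosets of the stabilizer $\Stab(x)$ in $G$, and then to invoke Lagrange's theorem. First I would recall that $\Stab(x) = \{g \in G \mid g\cdot x = x\}$ is a subgroup of $G$: closure and inverse-closure follow at once from the axioms of a group action. Hence the coset space $G/\Stab(x)$ is well defined, and Lagrange's theorem gives $|G| = |\Stab(x)|\,[G:\Stab(x)]$, where $[G:\Stab(x)]$ denotes the number of left cosets.

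The central step is to define the map
\be
\Phi : G/\Stab(x) \longrightarrow \Orb(x), \qquad \Phi(g\,\Stab(x)) = g\cdot x,
\ee
and to show it is a well-defined bijection. I would verify well-definedness first: if $g\,\Stab(x) = g'\,\Stab(x)$, then $g^{-1}g' \in \Stab(x)$, so $(g^{-1}g')\cdot x = x$ and therefore $g'\cdot x = g\cdot x$, which means the image does not depend on the chosen coset representative. Surjectivity is immediate from the definition of the orbit, since any element of $\Orb(x)$ has the form $g\cdot x = \Phi(g\,\Stab(x))$. For injectivity, $g\cdot x = g'\cdot x$ forces $(g^{-1}g')\cdot x = x$, hence $g^{-1}g' \in \Stab(x)$ and the two cosets coincide.

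Combining the bijection $\Phi$ with Lagrange's theorem then yields $|\Orb(x)| = [G:\Stab(x)] = |G|/|\Stab(x)|$, which rearranges to the claimed identity $|G| = |\Stab(x)||\Orb(x)|$. Because $G$ and $X$ are finite by hypothesis, no cardinality subtleties arise and all counts are genuine integers. The only step that genuinely requires care, and which I would single out as the main point of the argument, is the well-definedness of $\Phi$, that is, checking independence from the choice of coset representative; every remaining step is a routine unwinding of the definitions of orbit, stabilizer, and group action.
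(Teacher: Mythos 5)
Your proof is correct: the coset--orbit bijection $g\,\Stab(x)\mapsto g\cdot x$ together with Lagrange's theorem is the standard argument, and you verify well-definedness, injectivity, and surjectivity properly. The paper itself does not prove this proposition --- it merely recalls it and defers to standard algebra textbooks --- and your write-up is exactly the textbook proof being referenced, so there is nothing to reconcile.
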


\begin{proposition}[Burnside's lemma]
\label{prop:Burnside}
Let $G$ be a finite group acting over a finite set $X$. We denote $X^g = \left\{x \in X ~|~ g . x =x\right\}$ the set of fixed points of $g$. The number of orbits of the action of $G$ on $X$ is the average number of fixed points, \textit{i.e}.:

\bea |X/G| = \frac{1}{|G|} \sum_{g \in G} |X^g|.\eea
\end{proposition}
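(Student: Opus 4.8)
The plan is to prove the orbit-counting formula by the classical double-counting argument applied to the set of fixed incidences, taking the orbit-stabilizer theorem (Proposition~\ref{prop:orbstab}) as the key input. First I would introduce the incidence set
\bea
F = \left\{ (g,x) \in G \times X \, | \, g\cdot x = x \right\},
\eea
and evaluate its cardinality in two complementary ways. Summing first over $g \in G$, the fibre above each $g$ is by definition the fixed-point set $X^g$, so that $|F| = \sum_{g \in G} |X^g|$. Summing instead over $x \in X$, the fibre above each $x$ is the stabilizer $\Stab(x)$, which gives $|F| = \sum_{x \in X} |\Stab(x)|$. Equating the two counts yields the pivotal identity $\sum_{g \in G} |X^g| = \sum_{x \in X} |\Stab(x)|$.

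The next step is to convert the stabilizer sum into orbit data. By the orbit-stabilizer theorem, for each $x \in X$ one has $|\Stab(x)| = |G| / |\Orb(x)|$, whence
\bea
\sum_{x \in X} |\Stab(x)| = |G| \sum_{x \in X} \frac{1}{|\Orb(x)|}.
\eea
It then remains to regroup the right-hand sum according to the orbits of the action. Since $X$ is the disjoint union of its orbits, and since every element $x$ belonging to a fixed orbit $O$ satisfies $|\Orb(x)| = |O|$, each orbit contributes $\sum_{x \in O} 1/|O| = 1$ to the sum. Summing over all orbits, this produces exactly the number of orbits $|X/G|$, so that $\sum_{x \in X} 1/|\Orb(x)| = |X/G|$.

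Combining these identities gives $\sum_{g \in G} |X^g| = |G|\, |X/G|$, and dividing by $|G|$ yields the asserted formula. The argument is entirely elementary, so I expect no serious obstacle; the only point demanding mild care is the final orbit-regrouping, which relies on the two facts that the assignment $x \mapsto \Orb(x)$ is constant on each orbit and that the orbits partition $X$. Both are immediate consequences of the definition of a group action (the orbit relation being an equivalence relation), so this step, while central, is routine to justify.
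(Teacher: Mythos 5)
Your proof is correct and complete. Note that the paper itself does not prove this statement: it explicitly recalls Burnside's lemma and the orbit-stabilizer theorem in Appendix A with the remark that their proofs ``can be found in any standard textbook on algebra.'' What you have written is precisely the canonical textbook argument the authors defer to --- double-counting the incidence set $F = \{(g,x) \mid g \cdot x = x\}$, converting $\sum_{x} |\Stab(x)|$ into orbit data via Proposition~\ref{prop:orbstab}, and regrouping $\sum_{x} 1/|\Orb(x)|$ over the orbit partition of $X$ so that each orbit contributes exactly $1$. All steps are valid under the stated finiteness hypotheses, and your use of the orbit-stabilizer theorem as the key input matches the logical organization of the paper's appendix, where that proposition is stated immediately before this one.
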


\section{Enumeration of unitary invariants}\label{sec:CalculOrbites}
In this section, we provide tables that, for given sets of parameters, enumerate the number of non-isomorphic edge-colored bipartite graphs, each corresponding to an inequivalent tensor contraction.

\subsection{Counting invariants at fixed order}
\label{subapp:tableauOrdreFixe}

In subsection \ref{sub:comptageInvMonoOrder},  theorem \ref{theo:JBandSajaye} establishes the  counting of  the number of possible non-isomorphic graphs representing the contractions of $n$ tensors $T$ and $n$ tensors $\ov{T}$, at fixed order $d$: $Z^d_n = \sum_{p ~ \vdash ~ n} \Sym(p)^{d-2}$,  where the sum runs over all integer partitions $p$ of $n$. The
following table \ref{tab:formule_simple} delivers the values of $Z^d_n$  for a range of $d$ and
and $n$. 

{\footnotesize{
\begin{table}[ht]
\begin{center}
\begin{tabular}{|c|c c c c c c c c |}
\hline
$n \setminus d$& $1$& $2$& $3$&$4$&$5$&$6$&$7$&$8$\\
\hline
$1$ & $1$& $1$& $1$        &$1$         &$1$         &$1$         &$1$         &$1$\\
$2$ & $1$& $2$& $4$        &$8$         &$16$        &$32$        &$64$        &$128$\\
$3$ & $1$& $3$& $11$       &$49$        &$251$       &$1393$      &$8051$      &$47449$\\
$4$ & $1$& $5$& $43$       &$681$       &$14491$     &$336465$    &$7997683$   &$1.91\e+08$\\
$5$ & $1$& $7$& $161$      &$14721$     &$1730861$   &$2.07\e+08$ &$2.49\e+10$ &$2.99\e+12$\\
$6$ & $1$& $11$& $901$     &$524137$    &$3.73\e+08$ &$2.69\e+11$ &$1.93\e+14$ &$1.39\e+17$\\
$7$ & $1$& $15$& $5579$    &$25471105$  &$1.28\e+11$ &$6.45\e+14$ &$3.25\e+18$ &$1.64\e+22$\\
$8$ & $1$& $22$& $43206$   &$1.63\e+09$ &$6.55\e+13$ &$2.64\e+18$ &$1.06\e+23$ &$4.30\e+27$\\
$9$ & $1$& $30$& $378360$  &$1.32\e+11$ &$4.78\e+16$ &$1.73\e+22$ &$6.29\e+27$ &$2.28\e+33$\\
$10$ & $1$& $42$&$3742738$ &$1.32\e+13$ &$4.77\e+19$ &$1.73\e+26$ &$6.29\e+32$ &$2.28\e+39$\\
\hline
\end{tabular}
\end{center}
\caption{Number of non-isomorphic graphs corresponding to the contractions of $n$ tensors $T$ and $n$ tensors $\ov{T}$ of order $d$.}
\label{tab:formule_simple}
\end{table} 
}}

\subsection{Counting multiple-order invariants}\label{subapp:tableauOrdreMultiple}

In subsection~\ref{subsec:ComptageInvMultiOrder}, we stated theorem~\ref{theo:CountMultiOrder}, which enumerates the number of non-isomorphic graphs arising from compatible colored sets of vertices $\Lambda$ and $\Gamma$ as defined previously. Below, we compute the number of such graphs, characterized by certain cardinality functions $n$ and $m$ corresponding to $\Lambda$ and $\Gamma$, respectively, as described in table~\ref{tab:classification}. Other choices are certainly possible, this is just an illustration. 

For $d \geq 3$, we consider $s$ tensors $T$ of order $d$ and $s-1$ tensors $\ov{R}$ of the same order. To enable the required contractions, we introduce additional matrices and vectors such that the configurations $\Lambda$ and $\Gamma$ remain compatible (at most 4 matrices of different types, and at most 2 vectors of different types). While the exact enumeration is feasible, we also provide   approximations for large parameter values. The results are presented in table~\ref{tab:multiple_order}.
Furthermore, applying the algorithm described in subsection~\ref{subapp:tableauOrdreFixe} to compute invariants of fixed order yields results that exactly match those presented in table~\ref{tab:formule_simple}.

{\footnotesize{
\begin{table}[ht]
\begin{center}
\begin{tabular}{|c|m{13cm}|}
\hline
order $d$ & Cardinality function $n$ (associated with $\Lambda$) and $m$ (associated with $\Gamma$); $s \in \N$\\
\hline 
$3$ & $n(\{1,2,3\}) = s, ~ m(\{1,2,3\}) = s-1, ~m(\{1,2\}) = 1, ~ m(\{3\}) = 1$ and $0$ for the other color types\\
\hline
$4$ & $n(\{1,2,3,4\}) = s, ~ m(\{1,2,3,4\}) = s-1, ~ m(\{1,2\}) = 1, ~ m(\{3\}) = 1, ~ m(\{4\})=1$ and $0$ for the other color types\\
\hline
$5$ & $n(\{1,2,3,4,5\}) = s, ~ m(\{1,2,3,4,5\}) = s-1, ~ m(\{1,2\}) = 1, ~ m(\{3,4\}) = 1, ~ m(\{5\})=1$ and $0$ for the other color types\\
\hline
$6$ & $n(\{1,2,3,4,5,6\}) = s, ~ m(\{1,2,3,4,5,6\}) = s-1, ~ m(\{1,2\}) = 1, ~ m(\{3,4\}) = 1, ~ m(\{5\})=1, ~ m(\{6\})=1$ and $0$ for the other color types \\
\hline
$7$ & $n(\{1,\cdots, 7\}) = s, ~ m(\{1,\cdots,7\}) = s-1, ~ m(\{1,2\}) = 1, ~ m(\{3,6\}) = 1, ~ m(\{5,7\})=1, ~ m(\{4\})=1$ and $0$ for the other color types\\
\hline
$8$ & $n(\{1,\cdots, 8\}) = s, ~ m(\{1,\cdots,8\}) = s-1, ~ m(\{1,2\}) = 1, ~ m(\{3,6\}) = 1, ~ m(\{5,7\})=1, ~ m(\{4,8\})=1$ and $0$ for the other color types \\
\hline
$9$ & $n(\{1,\cdots, 9\}) = s$ $m(\{1,\cdots,9\}) = s- 1, ~ m(\{1,2\}) = 1, ~ m(\{3,6\}) = 1, ~ m(\{5,7\})=1, ~ m(\{4,8\})=1, ~ m(\{9\})=1$ and $0$ for the other color types\\
\hline
\end{tabular}
\end{center}
\caption{Some compatible classifications for different orders.}
\label{tab:classification}
\end{table} }}

{\footnotesize{
\begin{table}[ht]
\begin{center}
\begin{tabular}{|c|c c c c c c c|}
\hline
$s \setminus d$& $3$&$4$&$5$&$6$&$7$&$8$&$9$\\
\hline
$1$ & $1$        &$1$         &$1$         &$1$         &$1$         &$1$&$1$\\
$2$ & $4$        &$8$         &$16$        &$32$        &$64$        &$128$&$256$\\
$3$ & $20$       &$112$        &$656$       &$3904$      &$23360$      &$140032$&$839936$\\
$4$ & $107$       &$2345$       &$55451$     &$1327697$    &$31852787$   &$7.64\e+08$&$1.83\e+10$\\
$5$ & $660$      &$72584$     &$8646192$   &$1.04\e+09$ &$1.24\e+11$ &$1.49\e+13$&$1.80\e+15$\\
$6$ & $4625$     &$3121289$    &$2.24\e+09$ &$1.61\e+12$ &$1.16\e+15$ &$8.36\e+17$&$6.02\e+20$\\
$7$ & $37108$    &$1.78\e+08$&  $8.96\e+11$ &$4.52\e+15$ &$2.28\e+19$ &$1.15\e+23$&$5.78\e+26$\\
$8$ & $334723$   &$1.30\e+10$ &$5.24\e+14$ &$2.11\e+19$ &$8.52\e+23$ &$3.44\e+28$&$1.40\e+33$\\
$9$ & $3359867$  &$1.19\e+12$ &$4.30\e+17$ &$1.56\e+23$ &$5.66\e+28$ &$2.05\e+34$&$7.46\e+39$\\
$10$ & $3.71\e+07$  &$1.32\e+14$ &$4.78\e+20$ &$1.73\e+27$ &$6.30\e+33$ &$2.28\e+40$&$8.29\e+46$\\
\hline
\end{tabular}
\end{center} 
\caption{Number of non-isomorphic graphs associated with the settings of table \ref{tab:classification}.}
\label{tab:multiple_order}
\end{table}
}}

We now define, for $d$ given by each row of table \ref{tab:multiple_order},
a sequence $(Z_s)^d_{s \in \N^*}$ that counts the number of non-isomorphic edge-colored graphs,
parametrized by $s$ black vertices (tensors of order $d$),
$s-1$ white vertices (tensors of order $d$), supplemented by at most 4 white vertices of valence 2 of different
color types (matrices), and at most 2 white leaves of different color
types (vectors). The choice made in the table
ensures that graph edge-coloring and bipartiteness are preserved.
We obtain at the first orders:
\bea
d=3&&\widetilde Z_{s}^3 = 1,4,20,107,660,4625,37108, \ldots  \\ 
d=4&&\widetilde Z_{s}^4 = 1,8,112,2345,72584,3121289 \ldots  \\ 
d=5&&\widetilde Z_{s}^5 = 1,16,656,55451,8646192,\ldots  
\eea
These sequences are not reported in the OEIS website.
Furthermore, we conjecture that, for several values
of the cardinality functions, several other sequences
are new.

\section{Codes}\label{app:Code}
In this section, we provide the Python code that was used to populate table \ref{tab:multiple_order} from table \ref{tab:classification} in appendix \ref{sec:CalculOrbites}.  This involves the implementation of the orbit counting formula given in theorem \ref{theo:CountMultiOrder}. Note that with the proposed solution, the calculations quickly take much more time as we consider classifications with more and more tensors.

\begin{verbatim}
class ColoredSetVertices: 
"""The construction epitomizes the notion of a colored set of vertices. 
The cardinality function is encoded, and it suffices for the enumeration 
of the orbits."""

    def __init__(self):
        self.__cardinal_function = dict()
        #{type: number of vertices associated to this type}
        self.__chromatic_index = 0 

    def __update_chromatic_index(self):
        """Every time the cardinality function is modified, 
        one shall check the new chromatic index."""
        maximums = []
        for type in self.__cardinal_function.keys():
            maximums.append(max(type))
        self.__chromatic_index = max(maximums)

    def set_colored_vertices(self, type: set, number: int):
        """Adds 'number' colored vertices of type 'type' """
        self.__cardinal_function.update({frozenset(type): number})
        if number != 0 : self.__update_chromatic_index()

    def set_cardinal_function(self, my_cardinal_function: dict):
        """Changes the old cardinallity function with the
        new one given in argument"""
        self.__cardinal_function = my_cardinal_function
        self.__update_chromatic_index()

    def get_cardinal_function(self):
        return self.__cardinal_function
    
    def get_chromatic_index(self):
        return self.__chromatic_index
\end{verbatim}
\begin{verbatim}
#----------------------------------
class PowerSet():
    """A class that encodes a power set {1,...,d} from a given  
    argument d (the order)."""

    def __init__(self, order: int):
        self.__order = order # = d
        
        init = {i for i in range(1, self.__order+1)}

        self.__P = set(chain.from_iterable({frozenset(e) for 
        e in combinations(init, r)} for r in range(self.__order+1)))
        #Creating the power set of {1,...,d} is drawn 
        #from the following web page:
        #https://towardsdatascience.com/the-subsets-powerset-of
        #-a-set-in-python-3-18e06bd85678/  

        self.__section = { c:set() for c in range(1, self.__order+1) }
        #self.__section[c] = {A \in P({1,..d}) | c  \in A}
        for s in self.__P:
            for c in s:
                self.__section[c].add(s) 

    def get_order(self):
        return self.__order
    
    def get_power_set(self):
        return self.__P
    
    def get_section(self, c: int):
        if 1 <= c and c <= self.__order: return self.__section[c]    
\end{verbatim}
\begin{verbatim}
#----------------------------------
def compatible(Lambda:ColoredSetVertices, Gamma:ColoredSetVertices
, power_set:PowerSet) -> bool:
    """Checks if two colored sets of vertices are compatible."""
    
    cf1 = Lambda.get_cardinal_function() #cardinality function n°1
    cf2 = Gamma.get_cardinal_function() #cardinality function n°2

    #Checks, for every c color, that the c-color 
    #multiplicities of Lambda and Gamma are equal. 
    for c in range(1,power_set.get_order()+1):
        n,m = 0,0
        for s in power_set.get_section(c):
            if s in cf1:
                n += cf1[s]
            if s in cf2:
                m += cf2[s]
        if n != m: return False

    return True

def convertIntegerPartitionToDict(p:list) -> dict:
    """Transforms a partition into a dict to highlight multiplicity."""
    p_dict = dict()
    for i in p:
        if i in p_dict: p_dict[i] += 1
        else: p_dict[i] = 1

    return p_dict

def createIntegerPartition(n:int) -> list:
    """The function 'ordered_partitions' from sympy module 
    delivers all integer partitions of n. Yet it gives the partition 
    as a list of all parts without taking into account multiplicities.
    Example: 5 = 1+1+1+1+1 => [1,1,1,1,1]
    or 6 = 3+2+1 => [3,2,1]
    We want to convert that list into a list of pairs (multiplicity, part)
    encoded in a dictionnary.
    Example: {1:5} => 1*5 = 5 
    or {1:, 2:1, 3:1} => 1*1 + 2*1 + 3*1 = 6
    or {1:2, 3:5, 8:1} => 1*2 + 3*5 + 8*1 = 25
    """
    return [convertIntegerPartitionToDict(p) for
            p in ordered_partitions(n)]

def symmetry_factor(partition:dict) -> int:
    """Computes the symmetry factor of an integer partition"""
    p = 1
    for i, m_i in partition.items():
        p*= i**m_i * factorial(m_i)
    return p

def calc_numerator(mu:dict, nu:dict, power_set:PowerSet, 
max_integer:int)->int:
    """The delta function condition consists 
    in looking at sums of partitions and check if they are equal."""
    chromatic_index = power_set.get_order()
    produit = 1
    for c in range(1, chromatic_index+1):
        p1 = {i:0 for i in range(1, max_integer+1)}
        p2 = {i:0 for i in range(1, max_integer+1)}

        for s in power_set.get_section(c):
            if s in mu:
                for n,multiplicty in mu[s].items():
                    p1[n] += multiplicty
            if s in nu:
                for n,multiplicty in nu[s].items():
                    p2[n] += multiplicty
        if p1 != p2: return 0 #delta condition
        else: produit *= symmetry_factor(p1)
        
    return produit

def calc_denominator(mu: dict, nu: dict, power_set) -> int:
    produit = 1
    for s in power_set.get_power_set():
        if s in mu: produit *= symmetry_factor(mu[s])
        if s in nu: produit *= symmetry_factor(nu[s])
    return produit

def couting_orbits(Lambda:ColoredSetVertices,
Gamma:ColoredSetVertices) -> float:
    chromatic_index = max(Lambda.get_chromatic_index(), 
    Gamma.get_chromatic_index())
    power_set = PowerSet(chromatic_index)
    max_integer = max(max(Lambda.get_cardinal_function().values()), 
    max(Gamma.get_cardinal_function().values()))
    #This is the larger number that will be partitioned
    
    if compatible(Lambda, Gamma, power_set):
        cf1 = Lambda.get_cardinal_function() #cardinality function n°1
        cf2 = Gamma.get_cardinal_function() #cardinality function n°2

        integer_partition_cf1 = {frozenset(s): 
        createIntegerPartition(cf1[s]) for s in cf1.keys()}
        integer_partition_cf2 = {frozenset(s):
        createIntegerPartition(cf2[s]) for s in cf2.keys()}
        #Integer partitions associated with the cardinality functions of 
        #Lambda and Gamma
        somme  = 0
        #This part encodes the sum over
        #all partitions over the cardinality functions
        for x in itertools.product(*integer_partition_cf1.values()):
            mu = dict(zip(cf1.keys(), x))
            for y in itertools.product(*integer_partition_cf2.values()):
                nu = dict(zip(cf2.keys(), y))
                num = calc_numerator(mu, nu, power_set, max_integer)
                
                if num:
                    den = calc_denominator(mu, nu, power_set)
                    somme += num/den
        
        return somme
    else:
        print("The two colored sets of vertices are not compatible.")
        
    return 0.    
\end{verbatim}

\bibliographystyle{apsrev4-2-titles}
\bibliography{mybib}
\end{document}